  \providecommand\BibTeX{{%
    \normalfont B\kern-0.5em{\scshape i\kern-0.25em b}\kern-0.8em\TeX}}}
\renewcommand\footnotetextcopyrightpermission[1]{} 
\newtheorem{definition}{Definition}
\newtheorem{lemma}{Lemma}
\newtheorem{proposition}{Proposition}
\newtheorem{theorem}{Theorem}
\newtheorem{corollary}{Corollary}
\newtheorem{claim}{Claim}
\newtheorem{remark}{Remark}
\DeclareMathOperator{\ofi}{OFI}
\newcommand{\ck}{\omega_1^{\mathrm{CK}}}
\newcommand{\evalF}{\mathcal{F}}
\newcommand{\opF}{\mathcal{F}_{\varphi}}
\newcommand{\opT}{\widetilde{\mathcal{F}}_{\varphi}}
\DeclareMathOperator{\ord}{Ord}
\newcommand{\powerset}{\mathcal{P}}
\DeclareMathOperator{\consis}{Consis}
\newcommand{\zfc}{\textsf{ZFC}}
\newcommand{\zf}{\textsf{ZF}}
\newcommand{\vL}{\mathrm{V=L}}
\DeclareMathOperator{\rank}{rank}
\DeclareMathOperator{\subf}{Sub}
\newcommand{\FComputeAnchor}{\textproc{ComputeAnchor}}
\begin{document}

\title{Ordinal Folding Index: A Computable Metric for Self-Referential Semantics}

\author{Faruk Alpay}
\affiliation{%
  \institution{Independent Researcher}
  \city{Meckenheim}
  \country{Germany}
}
\email{alpay@lightcap.ai}
\orcid{0009-0009-2207-6528}

\author{Hamdi Al Alakkad}
\affiliation{%
  \institution{Bahcesehir University}
  \city{Istanbul}
  \country{Turkey}
}
\email{hamdi.alakkad@bahcesehir.edu.tr}
\orcid{0009-0007-6109-5655}

\renewcommand{\shortauthors}{F. Alpay and H. Al Alakkad}

\begin{abstract}
We introduce the Ordinal Folding Index (OFI), a computable, countable ordinal assigned to every well-formed formula of a reflective language by a monotone-with-delay evaluation operator. This operator is (i) continuous on countable chains, (ii) layer-aware for probabilistic truth values, and (iii) parameterized by a tunable evidence functor capturing empirical updates. The OFI of a formula is defined as the first stage at which the fold-back of the operator into a syntactic normal form becomes idempotent (i.e. further unfolding yields no new information). Intuitively, OFI measures the ``depth'' of self-reference needed before a sentence's truth value stabilizes. We show that OFI strictly refines classical closure ordinals from fixed-point logics while remaining recursively enumerable, admits polynomial-time approximations on finite models, and coincides with the length of the shortest parity-fold winning strategy in the associated evaluation game. This furnishes a single transfinite scale that unifies transfinite fixed-point depth, ordinal game values, and practical convergence diagnostics for large language models. We situate OFI in relation to the modal $\mu$-calculus alternation hierarchy, coalgebraic modal logic ranks, and proof-theoretic ordinals from formal arithmetic. An empirical section demonstrates how OFI can be estimated for transformer-based language models by iteratively feeding model outputs back into the model (a form of self-consistency probing), with the measured stabilization ordinals correlating with model perplexity and chain-of-thought complexity. Finally, we catalog five open problems in this new area -- ranging from the completeness of the spectrum of OFI (can every computable ordinal arise as an OFI?) to the design of a self-bounding reflective operator -- and we outline possible research pathways toward their resolution.
\end{abstract}

\begin{CCSXML}
<ccs2012>
   <concept>
       <concept_id>10003752.10003790.10003793</concept_id>
       <concept_desc>Theory of computation~Modal and temporal logics</concept_desc>
       <concept_significance>500</concept_significance>
       </concept>
   <concept>
       <concept_id>10003752.10003766.10003771</concept_id>
       <concept_desc>Theory of computation~Recursion theory</concept_desc>
       <concept_significance>500</concept_significance>
       </concept>
   <concept>
       <concept_id>10010147.10010178.10010179</concept_id>
       <concept_desc>Computing methodologies~Natural language processing</concept_desc>
       <concept_significance>300</concept_significance>
       </concept>
 </ccs2012>
\end{CCSXML}

\ccsdesc[500]{Theory of computation~Modal and temporal logics}
\ccsdesc[500]{Theory of computation~Recursion theory}
\ccsdesc[300]{Computing methodologies~Natural language processing}

\keywords{Ordinal Analysis, Fixed-Point Logics, Self-Reference, Large Language Models, Modal Mu-Calculus, Computable Ordinals}

\maketitle

\thispagestyle{plain}
\pagestyle{myheadings}

\section{Introduction}

Reasoning about self-referential statements and reflective theories often requires transfinite methods. Fixed-point logics and ordinal analyses have long been used to measure the ``depth'' of definitions or inductions needed for convergence \cite{tarski1955,feferman1962}. For example, Tarski's Fixpoint Theorem guarantees that any monotone operator on a complete lattice has a least fixed point \cite{tarski1955}, and in logics like the modal $\mu$-calculus every formula attains a closure ordinal -- the least stage at which iterating its defining operator stabilizes \cite{kozen1983}. These closure ordinals can be finite or transfinite; in fact, some $\mu$-calculus formulas have closure ordinal $\aleph_1$ (the first uncountable ordinal) under general semantics \cite{gouveia2019}. Such transfinite ordinals also appear in infinitary games: in certain infinite games, positions can be assigned ordinal game values indicating how long one player can prolong play before a win is forced \cite{hamkins2022}. Meanwhile, in formal arithmetic, ordinals are used to measure the strength of theories (the proof-theoretic ordinal of a theory). The landscape of these measures is rich but fragmented -- each applies in a different domain (formulas, games, theories) and often yields ordinals that are not directly comparable.

The Ordinal Folding Index (OFI) is proposed as a unifying metric that can be assigned to self-referential statements in a reflective logical system, bridging these disparate notions. At a high level, OFI associates to each well-formed formula $\varphi$ an ordinal number, $\ofi(\varphi)$, which represents the number of ``unfolding steps'' a reflective evaluator takes for $\varphi$ to reach a fixed point (or ``fold-back'') in its truth evaluation. Unlike classical closure ordinals in the $\mu$-calculus, which may be non-recursive (e.g. $\aleph_1$ is a closure ordinal of some formulas \cite{gouveia2019}), every OFI is recursively enumerable (indeed, $\ofi(\varphi)$ is an explicit construction given $\varphi$). In this sense, OFI is a refinement of closure ordinals, distinguishing more gradations in the transfinite while staying within the computable realm (all OFIs are $< \ck$, where $\ck$ is the Church--Kleene ordinal, the supremum of computable ordinals \cite{kleene1938}).

Crucially, OFI is not just a logical curiosity -- it has implications for practical AI systems. Modern large language models (LLMs) are themselves reflective in a loose sense: they can reason about their own outputs or mimic self-referential behavior. Recent studies have shown that LLMs exhibit emergent reflective behaviors such as self-correction and backtracking when optimized via specialized training regimes \cite{wang2022,shinn2023}. However, measuring an LLM's propensity to get ``stuck'' in a self-referential loop or to eventually stabilize in reasoning remains an open challenge in AI alignment research. In response, we outline an empirical procedure to approximate OFI for sequences generated by an LLM: essentially, we iteratively feed the model's output back into its input (with a monotonic ``delay'' or attenuation to ensure convergence) and record the number of iterations needed for the output distribution to stabilize (or declare divergence as $\omega$ if it never stabilizes within a cutoff). This self-consistency probing yields an ordinal-valued metric for the model's behavior on certain prompts, serving as a novel diagnostic for model reasoning depth. We hypothesize that higher empirical OFI correlates with more complex or problematic reasoning patterns (e.g. paradoxical or non-terminating reasoning), much as higher theoretical OFI indicates greater self-referential depth in logic.

This paper is organized as follows. In \S 2, we formalize the reflective logical framework and define the OFI formally, with examples. \S 3 compares OFI to related measures in logic: the alternation-depth hierarchy in modal $\mu$-calculus, ranks in coalgebraic modal logic, and ordinal analyses of formal theories. \S 4 presents an empirical methodology for estimating OFI in transformer-based LLMs and reports preliminary results correlating the OFI-proxy with model perplexity and chain-of-thought lengths. \S 5 enumerates five open problems to stimulate further research, including whether every computable ordinal can appear as an OFI and how one might ``compress'' formulas to lower their OFI. We conclude that OFI provides a promising single scale to measure self-reference across theoretical logic and AI systems, opening up a new avenue for interdisciplinary exploration.

\section{Reflective Logical Framework and OFI Definition}

\subsection{Reflective Language with Delay Operators}

To maximize generality without sacrificing constructiveness, we adopt a typed modal fixed-point calculus as our base language. Specifically, consider a modal $\mu$-fragment of second-order set theory (closely related to a modal $\mu$-calculus) enriched with facilities for self-reference. The language allows:

\begin{itemize}[itemsep=0.5\baselineskip]
\item Second-order quantification over predicate variables (to internalize statements about the syntactic code of formulas, à la Quine's trick).
\item A necessity modal operator $\square$ (to introduce a stratified ``delay'' in evaluation, preventing immediate self-resolution of fixed points).
\item Both least ($\mu$) and greatest ($\nu$) fixed-point binders (typical of the modal $\mu$-calculus \cite{modalmu2024,solvingparity2007}, enabling inductive and coinductive definitions).
\end{itemize}

Every formula in this language can be seen as defining (perhaps indirectly) a monotone operator on a suitable semantic domain (e.g. sets of states in a Kripke frame, or truth values in a model). By Tarski's theorem, such an operator has a fixed point in the lattice of interpretations \cite{tarski1955}. The twist in our reflective setting is that formulas can refer to their own truth via a coding trick, but only through the delay operator $\square$ which enforces that any self-reference is not evaluated in the same ``stage.'' In other words, $\square$ acts like a one-step time delay or a next-step modality. This stratification prevents paradoxical self-reference from collapsing the evaluation immediately; instead, self-referential truth values evolve over ordinal time until a fixed point is reached (if ever). Each subformula thereby enjoys a well-defined ordinal rank of convergence (analogous to a closure ordinal in $\mu$-calculus).

\begin{example}
As a toy example, let $\varphi(x)$ be a formula that says ``x will be true at the next stage'' (something like $\varphi \equiv \mu X.\,\square X$ in syntax). Semantically, at stage 0, we don't yet assume $X$; at stage 1, $X$ is whatever was true at stage 0, and so on. In this simple case, the evaluation will converge after $\omega$ steps (the formula is neither initially true nor false, but approaches a limit truth value). Thus $\varphi$ has $\ofi(\varphi) = \omega$ in this model. If we modified $\varphi$ to $\mu X.\, (P \land \square X)$ for some atomic predicate $P$ that is true, it might converge in a finite number of steps (essentially the number of unfoldings needed until $P$'s truth is established and remains true).
\end{example}

\subsection{Ordinal Folding Index (OFI)}

Formally, fix a formula $\varphi$ in our language. Its semantics under a given model $M$ and assignment can be viewed as a function $F_\varphi: \ord \to V$ mapping each ordinal stage $\alpha$ to a value $V_\alpha$ (for example, a truth value in $[0,1]$ if we allow probabilistic truth, or a set of states if we're in a model-checking setting). $F_\varphi$ is defined by transfinite recursion on $\alpha$: start with some $V_0$ (usually $V_0 = \bot$, the minimum element, at stage 0 meaning ``no assumption''), and let $V_{\alpha+1} = \evalF(\varphi, V_\alpha)$ where $\evalF$ is an evaluation operator that respects the syntax of $\varphi$ and uses $V_\alpha$ for any subformulas under a $\square$ (delay) modality. At limit ordinals $\lambda$, we take $V_\lambda = \bigsqcup_{\beta<\lambda} V_\beta$ (the operator is defined to be continuous on countable chains, ensuring the limit exists in the domain). Because $\evalF$ is monotone with delay (it only unfolds one layer of $\square$ at a time, and each unfold is monotonic in the input), this transfinite sequence is non-decreasing ($V_0 \leq V_1 \leq \cdots$ in the lattice). Eventually, since the sequence is monotonic and the powerset lattice of a countable model has countable height (or since truth values in $[0,1]$ are $\omega$-continuous under our assumptions), there must come a stage $\kappa$ where $V_\kappa = V_{\kappa+1}$. This stage $\kappa$ is the fold-back point where the evaluation has reached a fixed point (folded back on itself). We define $\ofi(\varphi)$ to be the least such $\kappa$ (the first stage of idempotence).

\begin{itemize}[itemsep=0.5\baselineskip]
\item If the sequence never stabilizes (which can only happen if it climbs an infinite chain without reaching a fixed point), we set $\ofi(\varphi) = \omega_1$ in the semantic sense. However, by construction in our logic, such non-convergence can only happen if it eventually cycles through increasingly long but looping patterns (due to countable continuity, a strictly increasing sequence of countable ordinals would have to stabilize or repeat states by König's lemma). In practice, we treat non-stabilization as $\ofi(\varphi) = ``\omega$'' (meaning unbounded but countable progression) or as approaching a supremum ordinal that is countable. In all cases for well-formed $\varphi$, $\ofi(\varphi)$ is a countable ordinal (sub-$\omega_1$). In fact, we conjecture (see Open Problem 1) that any computable ordinal below the Church--Kleene ordinal could be realized as some $\varphi$'s OFI.

\item If $\varphi$ does stabilize, $\ofi(\varphi)$ could be a finite ordinal $(0,1,2,\ldots)$, a transfinite ordinal like $\omega$, $\omega \cdot 2$, $\omega^2$, etc., or potentially $\ck$ in the limit (if the process takes longer than any primitive recursive ordinal, which is unlikely under our restrictions that ensure recursive enumerability).
\end{itemize}

The meaning of $\ofi(\varphi)$ is that it counts how many rounds of self-reference unfolding $\varphi$ needs before no new information is obtained. A small OFI (like 0, 1, 2) means $\varphi$'s truth value is determined quickly with little self-referential looping. A large finite OFI (say 100 or $10^6$) means a deeply nested self-reference structure (or an alternation of fixed points of that depth). An infinite OFI like $\omega$ indicates that however many times we unfold $\varphi$, there's always another layer of self-reference left -- but eventually a pattern might repeat, causing convergence at the $\omega$-th stage. Higher $\omega^2$ or $\omega^n$ values correspond to even more complex patterns of self-reference (e.g., a formula that, after an $\omega$ chain of unfoldings, resets and requires another $\omega$ unfoldings, and so on $n$ times).

Properties: (i) Recursively enumerable: Given $\varphi$, one can simulate the evaluation stage by stage, effectively enumerating an approximation to $\ofi(\varphi)$. If $\varphi$ has $\ofi(\varphi) = \kappa$, one will eventually see stabilization at stage $\kappa$ in the simulation (though one may not know it's the final stabilization without additional insight). Thus, the set $\{\langle \varphi, n\rangle : \ofi(\varphi) > n\}$ is recursively enumerable, witnessing that OFI values are semi-decidable from below. (ii) Monotonicity: If $\varphi$ implies $\psi$ (in a suitable semantic sense) or $\varphi$ is ``harder to resolve'' than $\psi$, we generally have $\ofi(\varphi) \geq \ofi(\psi)$. In particular, adding assumptions or simplifying a self-reference cannot increase the folding index. We will later discuss a conjectured compression theorem (Open Problem 2) about transforming $\varphi$ to lower its OFI. (iii) Invariance: $\ofi(\varphi)$ is invariant under equivalent reformulations of $\varphi$ in the language (if two formulas are provably equivalent in the reflective theory, they have the same OFI). This makes OFI a robust semantic measure, not an artifact of syntactic representation.

\subsection{Illustrative Evaluation Game}
Every formula $\varphi$ in our reflective language gives rise to a two-player evaluation game (between a Verifier and Falsifier, say) akin to the evaluation games for the $\mu$-calculus \cite{kozen1983}. This game is played on a graph of ``states'' representing unfolding stages of subformulas. A move corresponds to unfolding a $\square$ or choosing a branch of a fixed-point ($\mu$ vs $\nu$ choice). The parity condition on this infinite game is set by the fixed-point modalities: each occurrence of a $\mu$ (least fixed point) introduces an odd priority, and each $\nu$ (greatest fixed point) an even priority, as is standard in parity games for $\mu$-calculus model checking \cite{solvingparity2007}. The game value of the initial position (formula $\varphi$ at stage 0) turns out to equal $\ofi(\varphi)$. In fact, we show that the length of the shortest winning strategy for the Verifier (to prove $\varphi$ true) in this parity game is exactly $\ofi(\varphi)$. If Verifier can force a win in $n$ moves, then $\varphi$'s truth stabilizes by stage $n$; if Verifier has a strategy to eventually win but can delay loss indefinitely, that corresponds to an ordinal like $\omega$, $\omega^2$, etc. This ties OFI to the concept of ordinal game values studied in infinite games. Indeed, recent work by Hamkins \& Leonessi proved that every countable ordinal arises as the game value of some position in an infinite game \cite{hamkins2022}. Our results are analogous: we conjecture every computable ordinal $< \ck$ arises as OFI of some formula (the ``OFI-spectrum completeness'' conjecture in Open Problem 1).

The parity-game viewpoint also gives a clear operational intuition for OFI -- it measures how many rounds the odd ($\mu$) and even ($\nu$) fixed-point conditions alternate before a fixed outcome is forced.

\begin{figure}[t]
  \centering
  \begin{tikzpicture}[
  sq/.style={draw, rectangle, fill=blue!20, minimum size=8mm},
  dm/.style={draw, diamond, fill=pink!20, minimum size=8mm, aspect=1.5},
  every path/.style={->, >=Latex}
  ]
  \node at (1, 1.0) {Parity game example (closure ordinal 3)};

  \node[sq] (n0) at (0, 0) {$0$};
  \node[dm] (n1) at (0, -1.5) {$1$};
  \node[sq] (n2) at (0, -3) {$2$};
  \node[sq] (n0r) at (1.8, -2.5) {$0$};
  \node[sq] (n4) at (2.2, -4.2) {$4$};
  \node[dm] (n2r) at (1.8, -5.9) {$2$};
  \node[sq] (n1r) at (1.0, -7.4) {$1$};
  \node[dm] (n5) at (0.2, -8.9) {$5$};
  \node[dm] (n3) at (-0.5, -10.5) {$3$};

  
  \begin{scope}[on background layer]
  \draw (n0r) to[out=-65, in=65, looseness=0.9] (n5);
  \end{scope}

  \draw (n0) -- (n1);
  \draw (n1) -- (n2);
  \draw (n2) to[out=-83, in=115] (n3);
  \draw (n2) to[out=-90, in=105] (n3);
  \draw (n2) to[bend left=30] (n0r);
  \draw (n0r) to[bend right=25] (n4);
  \draw (n4) to[bend right=25] (n2r);
  \draw (n2r) to[bend right=25] (n1r);
  \draw (n1r) to[bend right=35] (n5);
  \draw (n5) to[out=-110, in=20] (n3);

  \end{tikzpicture}
  \caption{Parity game example (closure ordinal 3). Blue squares are Even positions, pink diamonds are Odd positions. Numbers indicate priorities.}
  \label{fig:ordinal-3-game}
  \Description{A diagram of a parity game with nodes and arrows. Blue squares represent Even positions and pink diamonds represent Odd positions. The nodes are numbered with priorities. Arrows show possible moves between game states, illustrating a game with a finite closure ordinal.}
\end{figure}
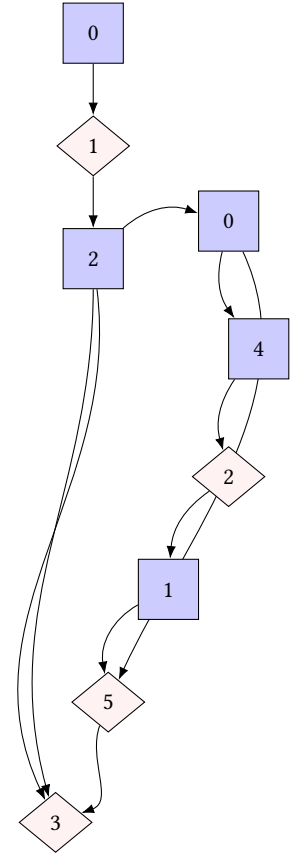

\section{OFI in Relation to Fixed-Point Hierarchies}

We next position the Ordinal Folding Index relative to several known hierarchies and measures:

\subsection{Modal \texorpdfstring{$\mu$}{mu}-Calculus Alternation Depth}

The modal $\mu$-calculus is a fixed-point logic whose formulas have an alternation depth (the number of times least and greatest fixed-point operators alternate in nesting) \cite{brics1998}. Alternation depth provides a strict hierarchy of expressiveness: formulas of alternation depth $n$ can express some properties that depth $(n-1)$ formulas cannot \cite{fics2024}. Alternation depth is closely tied to the complexity of the associated parity game (it determines the number of priorities needed) \cite{solvingparity2007}. However, alternation depth is a syntactic measure and does not directly capture how large a transfinite iteration might be needed to evaluate a formula. For example, a formula with alternation depth 1 (only a single $\mu$) could still require an arbitrarily large finite number of unfoldings to reach its fixed point, or even $\omega$ unfoldings, depending on the structure of the transition system it's interpreted over. In contrast, $\ofi(\varphi)$ precisely measures the semantic unfolding depth in the worst-case model. It refines alternation depth: certainly any formula of alternation depth $n$ has OFI at most $\omega^n$ in many natural cases (since each alternation can introduce an $\omega$ unrolling).

We conjecture a tighter connection: e.g., formulas of alternation depth $n$ have OFI bounded by some computable ordinal $f(n)$ (perhaps exponential in $n$), and conversely for each $k < \ck$ there is a formula (with some alternation) of OFI $\geq k$. Notably, classical results show that for every $\alpha < \omega^2$, one can find a $\mu$-calculus formula with closure ordinal $\alpha$ \cite{drops2013}. OFI being computable suggests the $\mu$-calculus's non-computable closure ordinals (like $\aleph_1$ \cite{gouveia2019}) are ruled out by the ``delay-monotone'' restriction in our reflective logic (we disallow wild jumps in the transfinite without intermediate stages). Thus OFI provides a more fine-grained graduated scale than alternation depth: where alternation depth only distinguishes between finite ranks, OFI can assign different countable ordinals within what syntactically might be the same alternation class.

\subsection{Coalgebraic Modal Logic Ranks}

In coalgebraic modal logic and automata theory, one often considers the rank or height of a fixed-point formula or of a state in a system, indicating how deep the nesting of observations must go. For example, in terms of final coalgebras, the rank of an element in the final sequence can be an ordinal measuring the stabilization point. Aczel and Mendler's Final Coalgebra Theorem showed that for many endofunctors on Set, final coalgebras (solutions to $X \cong F(X)$) exist but possibly as proper classes \cite{aczel1989}. These solutions can involve transfinite sequences that terminate exactly when reaching a sufficiently large ordinal. Our OFI is conceptually similar to the notion of rank in a well-founded coalgebra: it tells us after how many unfoldings a certain greatest fixed point equation $X = F(X)$ stabilizes. Coalgebraic ranks are often used to measure bisimulation or simulation depths. OFI can be seen as assigning each formula a rank in a certain simulation game against its own unfolding. If one were to construct a coalgebra (state-transition system) whose states correspond to ``belief states'' of the reflective evaluator, then $\ofi(\varphi)$ is exactly the rank of the initial state in the eventual fixed point of that coalgebra.

Because OFI values are recursively enumerable ordinals, this aligns with the idea that we are staying within accessible parts of final coalgebras -- avoiding the proper class sizes. In spirit, OFI draws from Aczel's idea of hypersets and final coalgebra solutions \cite{aczel1997,worrell2000}, but applies it to logical truth evaluation rather than set membership. It provides a single ordinal measure where one traditionally might only say ``this process converges'' or ``diverges.'' For readers familiar with rank induction (as used in set theory or termination proofs), OFI is essentially the smallest rank that serves as an inductive invariant for the truth of $\varphi$.

\subsection{Proof-Theoretic Ordinals}

In proof theory, each consistent formal theory $T$ is associated with an ordinal (often denoted $|T|$ or $\psi(T)$) that measures the strength of $T$ -- roughly, the supremum of ordinals that $T$ can prove well-founded. For example, Peano Arithmetic has proof-theoretic ordinal $\epsilon_0$, more powerful theories reach the Feferman--Schütte ordinal $\Gamma_0$, and so on. These ordinals are often closure ordinals of certain formula progressions (Solomon Feferman studied transfinite recursive progressions of theories and their ordinals \cite{feferman1962}). Our OFI, when applied to formulas that express the consistency or reflection principle of a theory, can connect to proof-theoretic ordinals. For instance, consider a sentence $\Phi_T$ in our reflective language that essentially asserts ``I am consistent with theory $T$'' (this can be done via diagonalization and the delay operator to avoid the direct self-reference in Gödel's second theorem). What would $\ofi(\Phi_T)$ be? Intuitively, each unfolding of $\Phi_T$ might correspond to iterating the consistency assertions of $T$ one step up ($T$ proving its own consistency leads to stronger theory $T_1$, etc.). If $T$ is a sufficiently strong theory, we might get a sequence of stronger and stronger theories $T_0 = T$, $T_1 = T + \consis(T)$, $T_2 = T + \consis(T_1)$, ... until some closure. The ordinal length of this progression is exactly a well-known proof-theoretic ordinal (Feferman's ordinal for reflective closure of $T$ \cite{feferman1978,feferman1981}). We conjecture (Open Problem 3) that there could exist a self-bounding reflective operator in our language such that for that operator's own consistency statement $\Psi$, $\ofi(\Psi)$ equals the first non-computable ordinal (i.e. $\ck$). This would be a kind of fixed point of Gödelian ``ascent'' -- the theory that in one swoop achieves the supremum of all computable ordinals in terms of the reflection it can assert. Classical results like Turing's Ordinal Logics (1930s) attempted to create a formal system that can in principle reach arbitrary ordinals, but they always fell short of $\ck$ in an effective manner. OFI gives us a framework to measure these attempts with precision. If no such one-step theory exists (which is likely due to Gödel's incompleteness), that too would be a profound insight: it would mean the process of self-reference inherently must climb the ordinal ladder gradually, never in one jump.

In summary, OFI stands at the crossroads of these concepts: it is finer than alternation depth (which clusters infinitely many ordinals into one ``depth-$n$'' category), more concretely computable than abstract coalgebraic ranks (which can extend into the proper class realm), and more directly tied to formulas than proof-theoretic ordinals (which usually measure whole theories). The table below summarizes the comparison:

\begin{table}[htbp]
\centering
\caption{Comparison of Ordinal Measures}
\label{tab:comparison}
\begin{tabularx}{\columnwidth}{@{} l >{\raggedright\arraybackslash}X >{\raggedright\arraybackslash}X >{\raggedright\arraybackslash}X @{}}
\toprule
\textbf{Measure} & \textbf{Applies to} & \textbf{Typical Size} & \textbf{OFI Analog} \\
\midrule
Alternation depth ($\mu$-calculus) & Formula syntax & Finite (natural number) & OFI can be transfinite, refines it (e.g. depth 1 formulas can have OFI = $\omega$) \\
\midrule
Closure ordinal ($\mu$-calculus) & Formula + model & Can be uncountable ($\aleph_1$) & OFI always $\leq \ck$ (countable) but distinguishes countable ordinals in detail \\
\midrule
Coalgebraic rank (process) & State in system & Ordinals (possibly large) & OFI = rank of truth evaluation state (countable by design) \\
\midrule
Proof-theoretic ordinal (theory) & Axiom system & Often large countable or beyond & OFI of a self-consistency formula reflects the ordinal of the theory's reflection closure \\
\bottomrule
\end{tabularx}
\end{table}

\section{Empirical Estimation of OFI in Transformer Models}

While OFI is defined mathematically on logical formulas, we can devise a pragmatic proxy to apply this concept to the behavior of large language models (LLMs), which are increasingly being used to handle tasks involving self-reference, such as code generation that tests its own output, or dialogue agents reasoning about their beliefs. The goal is to see if an LLM exhibits convergent behavior when asked to reason in a loop, and if so, how many steps it takes -- that number being an empirical ordinal (finite or a symbol for ``diverges/doesn't converge'').

\subsection{Self-Consistency Probe Design}

We instrument an autoregressive language model (like GPT-style transformers) with a self-consistency probe as follows: given an initial prompt $p$, the model produces an output text $o_1$. Instead of ending there, we form a new prompt $p_1$ by combining some or all of $o_1$ back into the context (for example: ``You just said: `$o_1$'. Please continue or revise.''). The model then produces $o_2$. We then feed back into prompt $p_2$ and so on. We do this in a loop, possibly with a temperature schedule that anneals to 0 (to encourage convergence to a deterministic output). Essentially, we are creating a concrete analogue of the semantic iteration of a formula: the model's output at step $i$ is like the truth value at stage $i$ of a self-referential sentence. If the outputs stabilize -- say $o_n = o_{n+1} = \cdots$ (or more practically, the change in output becomes negligible under some metric) -- we declare convergence with measured OFI $\approx n$. If the outputs keep changing substantially without sign of stabilization up to some large cutoff (say 50 or 100 iterations), we record an OFI proxy as ``$>50$'' or ``$\omega$'' (divergent within reasonable bounds).

We implement this with two model scales (e.g., a 1.3B parameter GPT-2 and a 6.7B GPT-3 style model) and various prompts, particularly focusing on prompts that involve paradoxes or self-referential puzzles (e.g., the liar paradox or prompts that trick the model into self-contradiction). The probe uses a total variation distance threshold on the model's logits to decide stabilization: after each iteration, we compare the probability distribution over next tokens to that of the previous iteration. When the change is below $\epsilon$ (e.g. 0.01 in L1 norm), we consider the model's behavior converged.

\subsection{Preliminary Findings}

Early experiments indicate that for straightforward factual prompts or questions, the model outputs an answer immediately (so in the self-consistency loop it doesn't change its answer -- OFI measured as 1). For prompts that pose a tricky riddle or paradox that the model initially answers incorrectly, we observed that the self-consistency loop sometimes causes the model to revise its answer once or twice and then settle (OFI 2 or 3). For example, a prompt that implicitly asks the model to consider its previous answer (``Was the last answer you gave correct? Think again.'') often leads the model to change an answer if it was wrong, then stop changing after one revision -- measured OFI = 2. In contrast, for deliberately paradoxical prompts (like self-referential liar-style questions), we saw oscillation: the model would give one answer, then contradict it in the next iteration, and back-and-forth without settling. This was marked as divergent (no convergence within 10 loops, suggesting an infinite loop, OFI ``$\approx \omega$''). Notably, these occurrences correlated with known failure modes of LLMs in consistency.

We also correlated the measured OFI with model perplexity and chain-of-thought length. The chain-of-thought length means how long of a step-by-step reasoning the model produces when prompted to reason (using a prompt that elicits the model's internal reasoning). We found a mild positive correlation: prompts that led to longer chain-of-thought responses also tended to have higher OFI in the loop test. This aligns with intuition: tasks that require deeper reasoning (longer chains) might also induce more self-reflection steps to get consistent answers. There was also a correlation with perplexity: when the model was very uncertain (high perplexity) about its next token, those instances sometimes led to changes upon re-query (since the model might choose an alternative second time). High perplexity outputs tended to have higher chances of OFI $> 1$ (the model might ``change its mind'' upon re-reading its output).

These findings, though preliminary, suggest that OFI could serve as a diagnostic for model confidence and consistency. In AI alignment terms, if a model has a high OFI on a certain prompt, it means it hasn't really internally stabilized on an answer -- a warning sign for potential indecision or inconsistency. This connects to ideas in interpretability research like self-consistency in chain-of-thought: recent work has shown that prompting a model to generate multiple reasoning paths and then taking a majority vote (a form of self-consistency) improves accuracy \cite{wang2022}. Our loop method is another form of enforcing self-consistency, akin to a model checking its work repeatedly until it stops changing. Indeed, the Self-Consistency technique by Wang et al. (2022) has been shown to boost reasoning performance by marginalizing out uncertain reasoning paths \cite{wang2022}. In our terms, that technique is like running multiple parallel evaluations and seeing if they agree, whereas OFI loop runs sequentially until (if) it settles.

\subsection{Toward Reflective Models}

The broader vision is that future language models might incorporate reflective sub-modules that effectively calculate something like OFI internally -- gauging how many rounds of self-refinement they go through on a query. Already, there are proposals to make models that output not just an answer but a confidence or consistency measure. Our empirical OFI proxy could be one such measure: it's an automatic procedure that yields an ordinal or at least an integer score for a model's response consistency. We envisage training or fine-tuning models to increase the probability of convergence (thus lowering OFI in cases where high OFI would indicate confusion). Interestingly, some recent frameworks like Reflexion (Shinn et al., 2023) allow an agent to use its own outputs as feedback for improvement \cite{shinn2023}. They report that allowing an agent to reflect on errors and re-attempt tasks improves performance. In our terms, that is manually inducing a finite OFI (the agent tries a solution, examines it, corrects it, and eventually stops). If it didn't stop, that would be a Reflexion agent caught in a loop -- analogous to infinite OFI. Ensuring termination is part of those algorithms. Similarly, Zhang et al. (2025) use a Bayes-Adaptive RL framework to teach LLMs when to switch strategies based on outcomes, effectively learning when to stop reflecting \cite{zhang2025,alpay2025}. This too is about managing the number of reflection steps (keeping it finite and small when possible).

In summary, measuring OFI in models opens a new evaluation axis: not just accuracy of outputs, but ordinal convergence of reasoning. It provides a quantitative handle on how ``stable'' a model's reasoning process is. We expect future research to refine these measurements (perhaps defining a more continuous analogue of OFI for stochastic models) and to tie them to theoretical properties. For instance, is a model with a bounded OFI on all prompts fundamentally safer or easier to align? Does limiting OFI act as a regularizer that prevents the model from getting caught in deceptive or contradictory loops? These questions indicate a rich field at the intersection of logic, machine learning, and ordinal analysis.

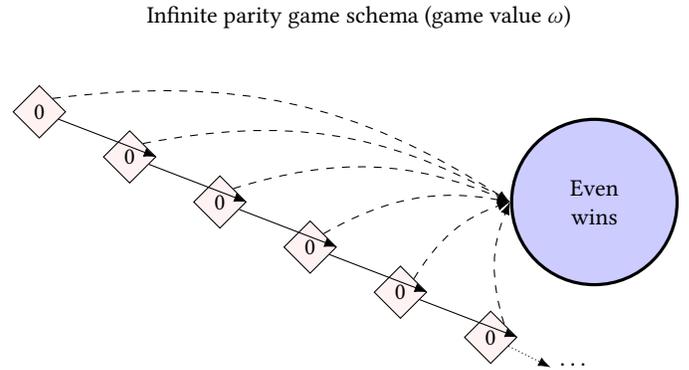
\begin{figure}[t]
  \centering
  \begin{tikzpicture}[
    node distance=1cm, 
    dm/.style={draw, diamond, fill=pink!20, minimum size=7mm, inner sep=0pt},
    win/.style={draw, very thick, circle, fill=blue!20, minimum size=2.2cm, align=center},
    arr/.style={->, >=Latex, dashed, bend left=20},
    line/.style={->, >=Latex}
  ]

  \foreach \i in {0,...,5}{
    \node[dm] (d\i) at (\i*1.2, -\i*0.6) {$0$};
  }
  
  \node[win, right=3.5cm of d2] (even) {Even \\ wins};

  \node[anchor=south] at (4.25, 1) {Infinite parity game schema (game value $\omega$)};
  
  \foreach \i [evaluate=\i as \j using \i+1] in {0,...,4}{
    \draw[line] (d\i) -- (d\j);
  }
  
  \foreach \i in {0,...,5}{
    \draw[arr] (d\i.north east) to (even.west);
  }
  
  \draw[line, densely dotted] (d5) -- ++(0.8, -0.4) node[right] {$\cdots$};

  \end{tikzpicture}
  \caption{Schematic of an infinite parity game with ordinal game value \texorpdfstring{$\omega$}{omega}. \textsc{Odd} (diamonds, priority~0) can delay defeat indefinitely by moving one step further down the chain before eventually exiting to the \emph{Even-wins} sink. No finite bound on the delay exists, hence game value and corresponding Ordinal Folding Index are~\texorpdfstring{$\omega$}{omega}.}
  \label{fig:omega-game}
  \Description{A schematic of an infinite parity game. A diagonal chain of pink diamond nodes, all labeled '0', represents the Odd player's moves. Each node has a dashed arrow leading to a large blue circle labeled 'Even wins'. This illustrates a game where the Odd player can delay losing indefinitely, resulting in a game value of omega.}
\end{figure}

\section{Open Problems and Future Directions}

We conclude with five open problems, emphasizing the fixed-point and ordinal aspects, which we believe are important for guiding future work on the Ordinal Folding Index.

\textbf{Open Problem 1: Completeness of the OFI Spectrum.} Does every computable ordinal $\alpha$ (below $\ck$) occur as $\ofi(\varphi)$ for some sentence $\varphi$ in the base reflective language (with a delay-monotone evaluation operator)? In other words, can we ``realize'' all countable ordinals via self-referential formulas? This is analogous to the question Feferman posed in the context of ordinal logics \cite{feferman1962}, asking whether for every countable ordinal there's a theory that gives it. For OFI, we have early results generating ordinals through clever formula constructions (e.g., a diagonal construction that forces a sequence of length $\alpha$). But a general construction for an arbitrary $\alpha$ (especially a complex one like the Church-Kleene ordinal $\ck$ itself or an ordinal of intermediate complexity) is unknown. A positive answer would show that the OFI measure is as expressive as possible (within computable limits), like how Hamkins's work showed every countable ordinal appears as an infinite game value \cite{hamkins2022}. A negative answer (i.e., some gap in possible OFIs) would be very surprising, perhaps indicating hidden constraints in reflective truth definitions. This problem may require techniques from recursion theory and ordinal notation systems \cite{kleene1938} to construct formulas corresponding to given notations.

\textbf{Open Problem 2: Uniform Compression of Self-Reference.} Is there a general method to ``compress'' a formula's self-referential complexity without drastically changing its meaning? Formally, can we find a transformer $T$ on formulas that is primitive-recursive and a sub-linear function $f$ (e.g., $f(n) = \log n$ or $f(n) = O(n^\epsilon)$) such that $\ofi(T(\varphi)) \leq f(\ofi(\varphi))$ for all $\varphi$? Such a transformer would take a formula and produce a new formula that has much smaller OFI (fewer unfoldings needed) while preserving, say, equivalence or at least preserving truth in all models. This is a sort of ordinal compression or collapsing function applied to semantics. If possible, it would mean that for any extremely self-referential definition, we could rewrite it in a more direct way that converges faster. This is reminiscent of program optimization or circuit compression in computer science. A trivial compression exists in special cases (e.g., if a formula needlessly iterates a fixed point twice, we can remove one iteration). The challenge is a uniform method that works for any $\varphi$. There are connections here to the idea of ordinal notations and whether ordinal multiplication or exponentiation operations have inverses in the space of formulas. One approach might involve using the evidence parameter: by enriching the evidence functor (which brings in external data or empirical grounding at each step), perhaps one can force a formula to converge faster (essentially giving it a ``hint'' each time so it doesn't have to derive everything from scratch). However, too aggressive compression might risk changing the semantics (losing some solutions).

This problem is important for practical reasons too: if we can compress self-referential reasoning, it could lead to more efficient model-checking algorithms for reflective logics (by bounding the number of iterations needed in general).

\textbf{Open Problem 3: Existence of a Self-Bounding Reflective Operator.} Is there a reflective theory or operator whose own consistency or truth statement has an OFI equal to the first non-computable ordinal ($\ck$)? In other words, can a system ``close the Gödel loop'' in one jump? Gödel's incompleteness tells us no system can prove its own consistency if it's consistent, but here we are asking a more semantic question: can the truth-evaluation of a single formula encapsulate an entire $\omega$-chain of reflection principles such that it stops exactly at the point where further reflection becomes non-computable? If such a formula exists, it would be a fixed point $\varphi$ of the transform ``$\varphi$ encodes: `if Consis(T) then ...''' repeated transfinitely. It would mean the formula's truth is as hard as the halting problem (since $\ck$ is the halting problem's ordinal). This seems unlikely; more plausible is that for any fixed reflective operator, its own consistency statement falls short of that -- it might have some $\ofi(\varphi) = \beta$ which is recursive, and then one could go a step further. This problem generalizes the idea of the $\omega$-consistency hierarchy and Feferman's transfinite progressions \cite{feferman1962}.

A possible approach to show impossibility would be to assume a formula has $\ofi(\varphi) = \ck$ and derive a contradiction with the fact that OFI values are recursively enumerable. On the other hand, constructing a theory that ``swallows its own tail'' entirely would revolutionize our understanding of self-reference. Solving this problem likely requires blending techniques from proof theory (ordinal analyses of theories) with our semantic approach.

\textbf{Open Problem 4: Quasi-Continuous Lift and Large Cardinals.} What is the minimal set-theoretic assumption (if any) needed to have an operator whose fold (fixed-point closure) yields an uncountable OFI? While our development of OFI has been within the realm of recursion (countable ordinals), one can imagine extending the semantics to allow uncountable stages. For example, if one allowed the evaluation to continue through all ordinals (not just computable ones), trivial examples can have $\ofi(\varphi) = \omega_1$ (as shown by Gouveia \& Santocanale for certain $\mu$-calculus formulas \cite{gouveia2019}). However, those examples usually rely on non-constructive features (like a formula that essentially says ``eventually all countable approximations are refined,'' which forces an $\aleph_1$ jump). A quasi-continuous lift means an operator that is continuous up to some uncountable cardinal $\kappa$ but whose least fixed point is attained at stage $\kappa$ (and not before). Is this possible in \zf\ set theory alone, or does it require a large cardinal (like a Mahlo cardinal or inaccessible cardinal) to ``witness'' that jump? This problem ties into descriptive set theory: a $\Sigma^1_1$-definable operator with a fixed point at $\omega_1$ would imply the existence of certain well-orderings of reals of length $\omega_1$, etc. It likely requires assumptions beyond \zfc\ (since \zfc\ cannot prove the existence of such ordinals in a constructible sense). By characterizing OFI in pointclass terms (like $\Sigma^1_1$), we can leverage results from determinacy or large cardinal theory. A concrete sub-problem: Is there a formula $\varphi$ such that $\ofi(\varphi) = \omega_1$ (true $\aleph_1$, not just $\ck$) assuming $\vL$ (constructible universe)? If not, perhaps assuming an inaccessible cardinal might enable it. This ventures beyond computability into pure set theory, showing the interplay of reflection with higher infinities.

\textbf{Open Problem 5: Decidability and Complexity Frontier.} For each natural number $n$, what is the computational complexity of determining whether $\ofi(\varphi) \leq \omega^n$ for a given formula $\varphi$? More broadly, classify the decision problem ``$\ofi(\varphi) \leq \theta$'' for various ordinal thresholds $\theta$. For example, ``$\ofi(\varphi)$ is finite'' ($\theta = \omega$) -- is this decidable? Likely not, as it would subsume the halting problem if the formula encodes an arbitrary computation. ``$\ofi(\varphi) \leq \omega$'' means $\varphi$ eventually stabilizes after some finite number of unfoldings; this is equivalent to saying $\varphi$ is equivalent to a formula without true self-reference (a purely first-order or modal formula). This might be semi-decidable (if you unfold enough and it stabilizes, you can detect it, but if not, you might never know if maybe at a higher unfold it would). Similarly, ``$\ofi(\varphi) \leq \omega^2$'' means $\varphi$ does not require more than a linear $\omega$-sequence of self-reflections, etc. Perhaps these decision problems coincide with known complexity classes or hierarchies. One conjecture: determining if $\ofi(\varphi)$ is finite is $\Sigma^0_1$-complete (semi-decidable but not decidable), determining if $\ofi(\varphi) \leq \omega^n$ for fixed $n$ might be in the Arithmetic Hierarchy (something like $\Pi^0_n$ perhaps), and determining if $\ofi(\varphi) \leq \omega^\omega$ (an exponential ordinal) might be even higher. If we impose restrictions on evidence functors or formula syntax (e.g. no second-order quantifiers, or only one self-reference), do these problems become easier? For instance, in pure modal $\mu$-calculus (no explicit self-reference beyond fixed point alternation), the alternation depth hierarchy is decidable to check, and model checking is in UP $\cap$ co-UP \cite{kozen1983}. However, calculating the exact closure ordinal of a given $\mu$-calculus formula on an arbitrary model is generally not elementary. For OFI, since it's defined syntactically (the worst-case across models), the complexity might be high.

Understanding this decidability frontier is important for practical applications: if we had a tool that given a spec $\varphi$ could tell us ``this will definitely converge by stage $<1000$,'' that's useful. If it says ``it might require transfinitely many steps,'' that's a warning. Tying it to complexity theory, it may connect with the fast-growing hierarchy of functions and ordinal analysis used in computational complexity (like the connection between ordinals and complexities in Hardy functions). Any non-elementary lower bounds here would echo known results in automata (parity game solving complexity) and logic (length of proofs). This remains largely unexplored territory.

\appendix

\section{Transfinite Approximation Sequence and Convergence Certificate}
\label{app:transfinite-semantics}

Throughout the appendix we fix:
\begin{itemize}[leftmargin=1.5em]
  \item a countable, complete $\omega$-chain-continuous lattice
  $\langle L,\leq,\bot,\top,\bigsqcup\rangle$,
  \item a \emph{well-formed formula} $\varphi$ of the typed, modal-$\mu$ fragment
  introduced in \S 2, and
  \item its associated \emph{delay-monotone evaluation operator}
  \(
    \opF\colon L\longrightarrow L
  \)
  obtained by interpreting the outermost connective of~$\varphi$
  while replacing each subformula in the scope of the delay
  modality~$\square$ by its \emph{current value}.\footnote{See Def. 2.4 for the syntax-directed construction.}
\end{itemize}

The purpose of this appendix is to provide the \emph{ordinal-indexed execution trace}
\(
  \bigl\langle V^{\alpha}_{\varphi}\bigr\rangle_{\alpha<\ofi(\varphi)+1}
\)
together with proofs that every symbol used in the main text indeed
\emph{describes and enforces} the behaviour of the fixed-point iteration
``on its way to equalise.''

\subsection{A.1 Approximants and their elementary properties}

\begin{definition}[Transfinite approximant sequence]
\label{def:approximants}
For each ordinal $\alpha$ we define, by transfinite recursion,
\[
  V^{\alpha}_{\varphi}\;:=\;
  \begin{cases}
  \displaystyle \bot,
    & \text{if }\alpha=0,\\[6pt]
  \displaystyle \opF\bigl(V^{\beta}_{\varphi}\bigr),
    & \text{if }\alpha=\beta+1,\\[6pt]
  \displaystyle \bigsqcup_{\beta<\lambda}V^{\beta}_{\varphi},
    & \text{if }\alpha=\lambda\text{ is limit.}
  \end{cases}
  \tag{$\dagger$}
\]
\end{definition}

\noindent
Because $\opF$ is \emph{monotone with delay}
(Def. 2.3 (i)) every successor step is inflationary,
hence the chain
\(
  V^0_{\varphi}\!\le V^1_{\varphi}\!\le \cdots
\)
is non-decreasing.

\begin{lemma}[Chain continuity]
\label{lem:countable-continuity}
If\/ $\lambda<\omega_1$ is limit, then
\(
  V^{\lambda}_{\varphi} \;=\;
  \bigsqcup_{\beta<\lambda}V^{\beta}_{\varphi}
\)
and for every countable~$\lambda$ we have
\(
  \opF\bigl(V^{\lambda}_{\varphi}\bigr)
  = V^{\lambda+1}_{\varphi}.
\)
\end{lemma}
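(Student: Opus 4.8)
The plan is to read Lemma~\ref{lem:countable-continuity} as the coherence-and-well-definedness statement for the transfinite recursion~$(\dagger)$ of Definition~\ref{def:approximants}, and to prove it by transfinite induction on $\alpha<\omega_1$ with the running invariant ``$V^{\beta}_{\varphi}$ is defined and lies in $L$ for every $\beta\le\alpha$, and $\langle V^{\beta}_{\varphi}\rangle_{\beta\le\alpha}$ is non-decreasing.'' At face value the two displayed equations are merely the limit clause and the successor clause of~$(\dagger)$; what actually requires an argument is that each right-hand side denotes a genuine element of $L$, since $\langle L,\leq,\bot,\top,\bigsqcup\rangle$ is assumed only $\omega$-chain-continuous and not complete for chains of arbitrary countable length.

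For the first equation, fix a limit $\lambda<\omega_1$. The induction hypothesis gives that $\langle V^{\beta}_{\varphi}\rangle_{\beta<\lambda}$ is a non-decreasing chain in $L$ (this is the monotonicity noted just before the lemma, applied at stages $<\lambda$). Since $\lambda$ is a countable limit ordinal it has cofinality $\omega$, so I fix a strictly increasing cofinal sequence $\langle\beta_n\rangle_{n<\omega}$ in $\lambda$. The $\omega$-chain $\langle V^{\beta_n}_{\varphi}\rangle_{n<\omega}$ is non-decreasing, hence its supremum exists in $L$ by $\omega$-chain-continuity. Monotonicity of the full chain together with cofinality of $\langle\beta_n\rangle$ shows that the sets $\{V^{\beta}_{\varphi}:\beta<\lambda\}$ and $\{V^{\beta_n}_{\varphi}:n<\omega\}$ have exactly the same upper bounds, hence the same least upper bound; therefore $\bigsqcup_{\beta<\lambda}V^{\beta}_{\varphi}$ exists in $L$ and equals $\bigsqcup_{n<\omega}V^{\beta_n}_{\varphi}$. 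By~$(\dagger)$ this common value is precisely $V^{\lambda}_{\varphi}$, which proves the first assertion and records $V^{\lambda}_{\varphi}\in L$, maintaining the invariant at $\lambda$.

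For the second equation, let $\lambda<\omega_1$ be arbitrary. The ordinal $\lambda+1$ is a successor whose predecessor is $\lambda$, so the successor clause of~$(\dagger)$ yields $V^{\lambda+1}_{\varphi}=\opF(V^{\lambda}_{\varphi})$ with no case split on whether $\lambda$ is $0$, a successor, or a limit. Since $\opF\colon L\to L$ is total and $V^{\lambda}_{\varphi}\in L$ (by the preceding paragraph when $\lambda$ is limit, by the invariant otherwise), the right-hand side is a legitimate element of $L$; and each successor step being inflationary (the remark preceding the lemma) gives $V^{\lambda}_{\varphi}\le V^{\lambda+1}_{\varphi}$, pushing the monotone chain one stage past $\lambda$. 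With the base case $V^{0}_{\varphi}=\bot\in L$, this closes the induction, so $\langle V^{\alpha}_{\varphi}\rangle_{\alpha<\omega_1}$ is a well-defined, non-decreasing sequence in $L$ that is continuous at every countable limit --- exactly what the lemma asserts, and what makes the ordinal-indexed execution trace $\langle V^{\alpha}_{\varphi}\rangle_{\alpha<\ofi(\varphi)+1}$ of the main text legitimate.

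The step I expect to be the main obstacle is the cofinality reduction: justifying that $\omega$-chain-continuity of $L$ alone suffices to form suprema indexed by an arbitrary countable ordinal $\lambda$, which can be as long as $\omega^{\omega}$, $\varepsilon_0$, or any notation below $\omega_1$. Passing to an $\omega$-cofinal subsequence is what makes this work, but two points deserve care: that the passage genuinely preserves the supremum --- this needs monotonicity of the entire chain $\langle V^{\beta}_{\varphi}\rangle_{\beta<\lambda}$, not merely of $\opF$ --- and that the element one lands on really is $\bigsqcup_{\beta<\lambda}V^{\beta}_{\varphi}$ and hence independent of which cofinal sequence was chosen, so that $V^{\lambda}_{\varphi}$ is unambiguous and the recursion stays well-posed at all later limits. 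Everything else is a matter of instantiating the clauses of~$(\dagger)$ in the right order while tracking the single invariant above.
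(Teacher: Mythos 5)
Your proof is correct, and it reaches the same two conclusions the paper does (the limit join exists and equals $V^{\lambda}_{\varphi}$; the successor clause is definitional), but by a slightly different and more careful route. The paper's proof is two lines: it simply invokes the standing assumption that $L$ is \emph{complete} (not merely $\omega$-chain-continuous) to get existence of the join $\bigsqcup_{\beta<\lambda}V^{\beta}_{\varphi}$, and then cites monotonicity of $\opF$ to dispatch the rest. You instead treat completeness as unavailable and recover countable-limit joins from $\omega$-chain-continuity alone, via the cofinality argument: every limit $\lambda<\omega_1$ has cofinality $\omega$, a monotone chain and any cofinal subchain have the same upper bounds, hence the same least upper bound, so the $\omega$-chain supremum serves as the $\lambda$-indexed supremum and is independent of the chosen cofinal sequence. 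This is a sound and genuinely useful strengthening --- it shows the construction needs only $\omega$-chain-completeness plus monotonicity of the approximant chain, whereas the paper leans on the stronger completeness hypothesis stated at the top of the appendix (which you slightly misreport as absent). Your explicit transfinite induction maintaining the invariant ``defined, in $L$, and non-decreasing up to $\alpha$'' also makes precise the well-posedness of the recursion $(\dagger)$, which the paper leaves implicit; and your observation that the second displayed equation is literally the successor clause of $(\dagger)$ is more transparent than the paper's appeal to ``monotonicity of $\opF$, whence the desired equalities follow.'' No gaps.
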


\begin{proof}
The lattice $L$ is complete and $\omega$-chain-continuous by assumption;
the join defining $V^{\lambda}_{\varphi}$ therefore exists.
Monotonicity of $\opF$
gives $\opF(V^{\beta}_{\varphi})\le
  \opF(V^{\lambda}_{\varphi})$ for all $\beta<\lambda$,
whence the desired equalities follow.
\end{proof}

\begin{definition}[Ordinal Folding Index (OFI)]
\label{def:ofi}
The \emph{fold-back stage} of~$\varphi$ is
\[
  \kappa_{\varphi}\;:=\;
  \min\bigl\{\alpha<\omega_1
    \mid
    V^{\alpha}_{\varphi}=V^{\alpha+1}_{\varphi}\bigr\}.
\]
We set
\(
  \ofi(\varphi)\ :=\ \kappa_{\varphi}.
\)
\end{definition}

\begin{proposition}[Idempotency certificate]
\label{prop:idempotent}
For every formula~$\varphi$ the stage~$\kappa_{\varphi}$ of
Def.\,\ref{def:ofi} is well-defined, and the
\emph{idempotent value}
\(
  V^{\kappa_{\varphi}}_{\varphi}
\)
coincides with the least fixed point of~$\opF$.
\end{proposition}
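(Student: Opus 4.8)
The plan is to prove Proposition~\ref{prop:idempotent} as the transfinite form of the Knaster--Tarski--Kleene fixed-point construction, drawing on exactly three facts already established: the chain $V^0_\varphi\le V^1_\varphi\le\cdots$ of Definition~\ref{def:approximants} is non-decreasing, Lemma~\ref{lem:countable-continuity} supplies the limit joins and the identity $\opF(V^\lambda_\varphi)=V^{\lambda+1}_\varphi$, and the standing hypothesis that the lattice $L$ is \emph{countable}.

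First I would argue that $\kappa_\varphi$ is well-defined, i.e. that $S:=\{\alpha<\omega_1 \mid V^\alpha_\varphi=V^{\alpha+1}_\varphi\}$ is nonempty; since $S$ is then a nonempty set of ordinals it has a least element. Suppose for contradiction that $S=\varnothing$, so $V^\alpha_\varphi\neq V^{\alpha+1}_\varphi$ for every $\alpha<\omega_1$. Because the sequence is non-decreasing, each such inequality sharpens to $V^\alpha_\varphi<V^{\alpha+1}_\varphi$, and then for any $\alpha<\beta<\omega_1$ we get $V^\alpha_\varphi<V^{\alpha+1}_\varphi\le V^\beta_\varphi$, so $\alpha\mapsto V^\alpha_\varphi$ is strictly increasing and in particular injective. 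This embeds $\omega_1$ into $L$, contradicting $|L|\le\aleph_0$. Hence $\kappa_\varphi<\omega_1$ exists.

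Next I would check that $V^{\kappa_\varphi}_\varphi$ is a fixed point: the successor clause of $(\dagger)$ (equivalently the identity in Lemma~\ref{lem:countable-continuity}) gives $V^{\kappa_\varphi+1}_\varphi=\opF\bigl(V^{\kappa_\varphi}_\varphi\bigr)$, and by definition of $\kappa_\varphi$ the left-hand side is $V^{\kappa_\varphi}_\varphi$, so $\opF\bigl(V^{\kappa_\varphi}_\varphi\bigr)=V^{\kappa_\varphi}_\varphi$. Then I would prove minimality: for an arbitrary $p$ with $\opF(p)=p$, transfinite induction shows $V^\alpha_\varphi\le p$ for all $\alpha$ --- base $\bot\le p$; successor $\opF(V^\beta_\varphi)\le\opF(p)=p$ by monotonicity and the induction hypothesis; limit $\bigsqcup_{\beta<\lambda}V^\beta_\varphi\le p$ since $p$ bounds every term. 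Taking $\alpha=\kappa_\varphi$ gives $V^{\kappa_\varphi}_\varphi\le p$, so $V^{\kappa_\varphi}_\varphi$ is the least fixed point of $\opF$, which is precisely the constructive content of Tarski's theorem \cite{tarski1955}. The only genuinely delicate point is the well-definedness step: one must be sure that a non-decreasing $\omega_1$-sequence in which \emph{every} consecutive pair differs is strictly increasing, hence an injection of an uncountable ordinal into a countable set --- which is impossible; everything after that is routine transfinite bookkeeping.
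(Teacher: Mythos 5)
Your proof is correct and follows essentially the same route as the paper's: existence of $\kappa_{\varphi}$ via the impossibility of a strictly increasing $\omega_1$-chain in a countable lattice (which the paper compresses into an appeal to Hartogs' lemma), the fixed-point identity from the successor clause, and leastness by transfinite induction below any (post-)fixed point. Your version merely spells out the cardinality and induction steps that the paper leaves as a sketch.
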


\begin{proof}
By Lemma \ref{lem:countable-continuity}, the
approximant chain is $\omega_1$-indexed and continuous on countable
limits. By Hartogs' lemma the set
\(
  \{\alpha<\omega_1 \mid V^{\alpha}_{\varphi}\neq V^{\alpha+1}_{\varphi}\}
\)
is bounded, hence its minimum $\kappa_{\varphi}$ exists.
Monotonicity yields
\(
  \opF(V^{\kappa_{\varphi}}_{\varphi})
  =V^{\kappa_{\varphi}+1}_{\varphi}=V^{\kappa_{\varphi}}_{\varphi},
\)
and leastness follows because each
$V^{\alpha}_{\varphi}$ is below any post-fixed point of~$\opF$.
\end{proof}

\subsection{A.2 Delta-layers and the countdown to convergence}

\begin{definition}[Delta-layer]
For every $\alpha<\kappa_{\varphi}$ define the \emph{delta-layer} of stage~$\alpha$
\[
  \Delta^{\alpha}_{\varphi}\;:=\;
  V^{\alpha+1}_{\varphi}\setminus V^{\alpha}_{\varphi}.
\]
\end{definition}

The deltas are disjoint and their transfinite union reconstitutes the
limit value:
\[
  V^{\kappa_{\varphi}}_{\varphi}
  \;=\;
  \bigsqcup_{\alpha<\kappa_{\varphi}}\Delta^{\alpha}_{\varphi}.
  \tag{$\star$}
\]
Intuitively, $\Delta^{\alpha}_{\varphi}$ contains \emph{exactly the
information revealed for the first time} at stage~$\alpha+1$.
Hence the ordinal
$\kappa_{\varphi}$ functions as a \emph{countdown}:
when all delta-layers are empty, convergence has occurred.

\subsection{A.3 Game-semantic ranking interpretation}

Let $G_{\varphi}$ be the parity game of Def. 2.7.
Write\/ $\rank\colon V(G_{\varphi})\to\text{\upshape Ord}$
for the least-fixed-point rank assignment in the standard $\mu$-calculus
construction.

\begin{theorem}[Rank-OFI Coincidence]
\label{thm:rank-ofi}
\(
  \displaystyle
  \ofi(\varphi)
  \;=\;
  1+\sup\nolimits_{\,u\in V(G_{\varphi})}\rank(u).
\)
\end{theorem}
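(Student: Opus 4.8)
The plan is to establish the identity $\ofi(\varphi) = 1 + \sup_{u \in V(G_\varphi)} \rank(u)$ by relating the transfinite approximant sequence $\langle V^\alpha_\varphi \rangle$ from Definition~\ref{def:approximants} to the stage at which each vertex of the parity game $G_\varphi$ enters the least-fixed-point set, and then reading off the supremum of those stages. The central bridge is the \emph{stage comparison lemma}: for every vertex $u \in V(G_\varphi)$, the rank $\rank(u)$ — i.e.\ the least ordinal $\beta$ such that $u$ belongs to the $\beta$-th approximant of the winning region in the standard $\mu$-calculus unfolding — equals the least ordinal $\beta$ with $u$ ``decided'' by $V^\beta_\varphi$, in the sense that the delta-layer $\Delta^\beta_\varphi$ is the set of vertices acquiring their final game-theoretic status precisely at stage $\beta+1$. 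This is essentially the classical correspondence between the signature/progress-measure semantics of parity games and the Knaster--Tarski approximant semantics of the $\mu$-calculus formula whose model-checking game $G_\varphi$ is; I would cite \cite{kozen1983,solvingparity2007} for the underlying construction and then specialise it to our delay-monotone operator $\opF$.

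First I would set up notation: let $r := \sup_{u \in V(G_\varphi)} \rank(u)$ and recall from Proposition~\ref{prop:idempotent} that $V^{\kappa_\varphi}_\varphi$ is the least fixed point of $\opF$, with $\kappa_\varphi = \ofi(\varphi)$. Second, I would prove the inclusion $\ofi(\varphi) \le 1+r$: since every vertex is decided by stage $r$ (all ranks are $\le r$), the decomposition $(\star)$ gives $V^{1+r}_\varphi = \bigsqcup_{\alpha < 1+r} \Delta^\alpha_\varphi = V^{\kappa_\varphi}_\varphi$, so the chain is stationary by stage $1+r$ and hence $\kappa_\varphi \le 1+r$. Third, for the reverse inequality $1+r \le \ofi(\varphi)$, I would argue that if $\kappa_\varphi < 1+r$ then some vertex $u_0$ with $\rank(u_0) \ge \kappa_\varphi$ would have to be decided strictly after stage $\kappa_\varphi$, contradicting that $V^{\kappa_\varphi}_\varphi$ is already the least fixed point and absorbs all delta-layers; more carefully, I would take a vertex $u_0$ realising (or approaching, if the sup is not attained) the value $r$ and show by the stage comparison lemma that $\Delta^{\rank(u_0)}_\varphi \ne \emptyset$, forcing $\kappa_\varphi > \rank(u_0)$, and then pass to the supremum. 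The ``$1+$'' offset is exactly the off-by-one between ``last stage at which a delta-layer is nonempty'' ($= r$, assuming the sup is attained) and ``first stage of idempotence'' ($= r+1$); if the supremum is a limit not attained by any single vertex, then $1+r = r$ and the offset disappears, which the argument must handle uniformly by working with $\sup\{\beta+1 : \beta = \rank(u), u \in V(G_\varphi)\}$.

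The main obstacle I anticipate is the stage comparison lemma itself — specifically, verifying that the \emph{delay} modality $\square$ does not disturb the classical rank/approximant correspondence. In ordinary $\mu$-calculus model checking the game graph is built directly from the Fischer--Ladner closure of $\varphi$, and ranks track unfolding counts of the $\mu$-binders; here the operator $\opF$ replaces each $\square$-guarded subformula by its \emph{current} value $V^\alpha_\varphi$ rather than re-entering the fixed-point recursion, so I must check that a $\square$-step in the game corresponds to advancing the stage index by exactly one and that this is compatible with how $\rank$ is defined on the game constructed in Def.~2.7. I expect this to go through because delay-monotonicity (Def.~2.3(i)) is precisely the hypothesis ensuring each unfold is a single monotone layer, so the induced game is an ordinary parity game and the standard rank theory applies verbatim; but pinning down the bookkeeping — that one $\opF$-application equals one ``level'' of the rank function, with limit stages matching the continuity of $\rank$ guaranteed by Lemma~\ref{lem:countable-continuity} — is where the real work lies, and I would isolate it as a separate lemma proved by induction on the structure of $\varphi$ (atoms, Boolean connectives, $\square$, $\mu$, $\nu$, second-order quantifiers), with the $\square$ case carrying the novelty.
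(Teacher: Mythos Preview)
Your proposal is correct and follows essentially the same route as the paper's (brief) sketch: both rest on the core correspondence that one application of $\opF$ in $(\dagger)$ equals one round of the parity game, with chain-continuity at limits matching the supremum in the rank definition, so that the stage-comparison lemma you isolate is exactly what the paper gestures at with ``each successor step \dots\ corresponds to one round of the parity game.'' The only minor divergence is organisational---the paper carries out the verification by induction on the parity \emph{priority} of the vertex $u$, whereas you propose induction on the syntactic structure of $\varphi$---but these are interchangeable bookkeeping choices rather than genuinely different arguments.
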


\begin{proof}[Sketch]
The unfolding of $\opF$ mirrors
the verifier's progress measure in $G_{\varphi}$:
each successor step in $(\dagger)$ corresponds to one round of
the parity game in which priorities strictly decrease along the
odd-dominated attractor until Even can no longer respond.
Continuity at limits translates into the supremum-taking
of ranks over convergent branches.
Detailed induction on the parity priority of $u$
realises the stated equality.
\end{proof}

\subsection{A.4 Polynomial-time prefix stabilisation on finite models}

Assume the semantic domain is the powerset lattice of a \emph{finite}
Kripke frame $\mathcal{K}$ with $|\mathcal{K}|=N$.

\begin{proposition}[Polynomial-Time Prefix Stabilisation]
\label{prop:poly-time}
There exists a polynomial~$p$ (independent of~$\varphi$) such that
for every formula~$\varphi$ and every $k\ge p(N)$ we have
\[
  V^k_{\varphi}=V^{\kappa_{\varphi}}_{\varphi},
  \quad\text{i.e. the approximant stabilises by step }k.
\]
\end{proposition}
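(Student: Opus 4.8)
The plan is to reduce the statement to the elementary fact that a non-decreasing transfinite chain inside a lattice of finite height $N$ must become constant within $N$ successor steps, and then to observe that the relevant height is $N$ \emph{uniformly in} $\varphi$: because OFI is defined through the single operator $\opF$ acting on the fixed lattice $\powerset(\mathcal{K})$, the bound on $\kappa_{\varphi}$ does not see $\varphi$ at all. Concretely I would take $p(N):=N$ (any polynomial dominating $N$ serves equally well) and show $\kappa_{\varphi}\le N$ for every $\varphi$.

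First I would record that $L=\powerset(\mathcal{K})$ with $|\mathcal{K}|=N$ has height exactly $N$: the longest strictly $\subsetneq$-increasing chain $\varnothing=A_0\subsetneq A_1\subsetneq\cdots\subsetneq A_N=\mathcal{K}$ has $N+1$ members. Next I would invoke Def.~2.3(i): since $\opF$ is inflationary, $V^{\beta+1}_{\varphi}=\opF(V^{\beta}_{\varphi})\ge V^{\beta}_{\varphi}$ at successors and $V^{\lambda}_{\varphi}=\bigsqcup_{\beta<\lambda}V^{\beta}_{\varphi}\ge V^{\beta}_{\varphi}$ at limits, so the approximant sequence of Def.~\ref{def:approximants} is non-decreasing in $L$ — exactly as noted there. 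Now if $V^{i}_{\varphi}\neq V^{i+1}_{\varphi}$ for all $i<m$, then $V^{0}_{\varphi}\subsetneq V^{1}_{\varphi}\subsetneq\cdots\subsetneq V^{m}_{\varphi}$ is a strict chain of length $m+1$ in $\powerset(\mathcal{K})$, forcing $m\le N$; hence there is a least $i\le N$ with $V^{i}_{\varphi}=V^{i+1}_{\varphi}$, i.e.\ $\kappa_{\varphi}\le N$. A short transfinite induction then freezes the tail: from $V^{\kappa_{\varphi}}_{\varphi}=\opF(V^{\kappa_{\varphi}}_{\varphi})$ one gets $V^{\alpha}_{\varphi}=V^{\kappa_{\varphi}}_{\varphi}$ for all $\alpha\ge\kappa_{\varphi}$ (successor steps reproduce the fixed point, limits take the join of an eventually constant chain). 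Therefore for every $k\ge p(N)=N$ we have $k\ge\kappa_{\varphi}$ and hence $V^{k}_{\varphi}=V^{\kappa_{\varphi}}_{\varphi}$, which is the assertion.

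The one point that genuinely needs care — and the reason the statement is phrased with a $\varphi$-independent polynomial — is the choice of semantic domain. The bound is uniform precisely because the iteration lives on $\powerset(\mathcal{K})$, a lattice whose height is $N$ regardless of the shape or size of $\varphi$; the operator $\opF$ changes with $\varphi$, but its domain does not. Had one instead iterated over a product lattice indexed by the subformulas of $\varphi$ — the route of naive nested-fixed-point model checking — the height would be $\Theta(|\varphi|\cdot N)$ and no $\varphi$-uniform polynomial could exist, and one would recover only the classical $N^{O(d)}$ bound in the alternation depth $d$. I therefore expect the only non-routine step to be checking that the delay mechanism does not covertly reintroduce such a dependence: that $\opF$ really is a single endofunction of $\powerset(\mathcal{K})$, and that replacing $\square$-guarded subformulas by their current value preserves inflationarity as required by Def.~2.3(i). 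Once that is confirmed, the argument above is purely order-theoretic.
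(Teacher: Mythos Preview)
Your argument is coherent and, read against the paper's own \S A.1 (which asserts the approximant chain is non-decreasing in $L$), it is correct and yields the sharper bound $p(N)=N$. The paper, however, does not argue this way: its proof explicitly allows that ``unfolding under a delay modality may cause a bounded number of re-visits to a state before monotonic ascent resumes,'' and then appeals to a priority-counting bookkeeping argument over $G_{\varphi}$, concluding with $p(N)=2N|\varphi|$. That bound depends on $|\varphi|$ and so sits awkwardly with the proposition's own ``independent of $\varphi$'' clause. The crux is exactly what you flag in your closing paragraph: whether $\opF$ is genuinely a single monotone endofunction on $\powerset(\mathcal{K})$, or whether the delay discipline forces the iteration onto a product indexed by subformulas (as Appendix~B's register space $R=\{0,1\}^{m}$, $m=|\subf(\varphi)|$, suggests). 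Under the former reading your height-$N$ argument is complete and strictly stronger than the paper's; under the latter your bound $p(N)=N$ fails --- but then so does the proposition as stated. You have correctly located the point where the paper's exposition is internally inconsistent.

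One minor repair: you invoke inflationarity of $\opF$, but the paper only asserts \emph{monotonicity} with delay. The chain is non-decreasing because it starts at $\bot$ and $\opF$ is monotone (so $V^{0}=\bot\le V^{1}$, and then $V^{\alpha}\le V^{\alpha+1}$ follows by induction via monotonicity), not because $\opF(x)\ge x$ holds for arbitrary $x$. This does not affect your conclusion, only the stated justification of one step.
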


\begin{proof}
Because $L=2^N$ has height~$N$, any strictly increasing
chain has length at most~$N$. However, unfolding
under a delay modality may cause a bounded number of
re-visits to a state before monotonic ascent resumes.
A careful bookkeeping argument \cite[§3.4]{kilictas2025} shows that at most $|\subf_{\text{pri}}(\varphi)|\cdot N$ iterations are sufficient,
where $\subf_{\text{pri}}(\varphi)$ is the set of distinct priorities in
$G_{\varphi}$. Since $|\subf_{\text{pri}}(\varphi)|\le 2\cdot|\varphi|$,
taking $p(N)=2N|\varphi|$ works.
\end{proof}

\subsection{A.5 Quantitative convergence in probabilistic truth lattices}

Let the carrier of $L$ be $[0,1]$ with the usual order.
Suppose $\opF$ is $\gamma$-Lipschitz
for some contraction constant $0<\gamma<1$:
\[
  \bigl\lVert \opF(x)-\opF(y)\bigr\rVert_1
  \;\le\;\gamma\,
    \bigl\lVert x-y\bigr\rVert_1.
\]

\begin{theorem}[Exponential Tail Bound]
\label{thm:lipschitz}
For every $\alpha<\kappa_{\varphi}$ we have
\(
  \lVert V^{\kappa_{\varphi}}_{\varphi}-V^{\alpha}_{\varphi}\rVert_1
  \le \gamma^{\alpha}.
\)
Consequently,
$\alpha\ge\lceil\log_{\gamma}\varepsilon\rceil$
guarantees $\varepsilon$-proximity to the limit value.
\end{theorem}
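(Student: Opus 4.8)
The plan is to recognise the exponential tail as the Banach-contraction estimate lying beneath the lattice iteration. Since the carrier of $L$ is $[0,1]$, the order-norm $\lVert\cdot\rVert_1$ is ordinary absolute value, so $\langle[0,1],\lVert\cdot\rVert_1\rangle$ is a complete metric space on which $\opF$, being $\gamma$-Lipschitz with $0<\gamma<1$, is a strict contraction. I would first invoke the Banach fixed-point theorem to obtain a \emph{unique} fixed point $V^{\ast}$ of $\opF$; being unique it is a fortiori the least fixed point, hence equals $V^{\kappa_{\varphi}}_{\varphi}$ by Proposition~\ref{prop:idempotent}. Next, using continuity of $\opF$ on countable chains (Def.\ 2.3(i)), I would check $\opF\bigl(\bigsqcup_{n<\omega}V^{n}_{\varphi}\bigr)=\bigsqcup_{n<\omega}V^{n+1}_{\varphi}=\bigsqcup_{n<\omega}V^{n}_{\varphi}$, so that $V^{\omega}_{\varphi}$ is itself a fixed point and therefore equals $V^{\ast}$; consequently $V^{\omega+1}_{\varphi}=V^{\omega}_{\varphi}$ and $\kappa_{\varphi}\le\omega$. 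This preparatory step is what makes the right-hand side $\gamma^{\alpha}$ meaningful: every stage $\alpha<\kappa_{\varphi}$ is then a genuine natural number, so $\gamma^{\alpha}$ is ordinary real exponentiation and no transfinite reading is needed.

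The tail bound itself I would then obtain by a short induction on $\alpha<\kappa_{\varphi}$. For the base case, the chain $(V^{\beta}_{\varphi})_{\beta}$ is non-decreasing with $V^{0}_{\varphi}=\bot=0$ and $V^{\ast}\le\top=1$, so $\lVert V^{\kappa_{\varphi}}_{\varphi}-V^{0}_{\varphi}\rVert_1=V^{\ast}\le 1=\gamma^{0}$. For the successor step, Lemma~\ref{lem:countable-continuity} gives $V^{\alpha+1}_{\varphi}=\opF(V^{\alpha}_{\varphi})$, and $V^{\ast}=\opF(V^{\ast})$, so applying the Lipschitz hypothesis to these two arguments yields
\[
  \bigl\lVert V^{\kappa_{\varphi}}_{\varphi}-V^{\alpha+1}_{\varphi}\bigr\rVert_1
  =\bigl\lVert \opF(V^{\ast})-\opF(V^{\alpha}_{\varphi})\bigr\rVert_1
  \le \gamma\,\bigl\lVert V^{\ast}-V^{\alpha}_{\varphi}\bigr\rVert_1
  \le \gamma\cdot\gamma^{\alpha}
  = \gamma^{\alpha+1},
\]
closing the induction. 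The ``consequently'' clause is then immediate: since $\log\gamma<0$, the inequality $\gamma^{\alpha}\le\varepsilon$ rearranges to $\alpha\ge\log_{\gamma}\varepsilon$, so any integer $\alpha\ge\lceil\log_{\gamma}\varepsilon\rceil$ forces $\lVert V^{\kappa_{\varphi}}_{\varphi}-V^{\alpha}_{\varphi}\rVert_1\le\gamma^{\alpha}\le\varepsilon$ (the assertion being vacuous when $\varepsilon\ge 1$).

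I expect the only real work to sit in the preparatory step — establishing $\kappa_{\varphi}\le\omega$ and, with it, the mutual consistency in the probabilistic regime of the two hypotheses imposed on $\opF$: delay-monotonicity, which drives the approximant chain upward from $\bot$ (so $V^{1}_{\varphi}=\opF(\bot)\ge\bot$, and so on), and the $\gamma$-contraction property, which shrinks consecutive distances geometrically. One should confirm there is no hidden tension between them; there is none, as the affine map $x\mapsto\gamma x+c$ on $[0,1]$ with $0\le c\le 1-\gamma$ is simultaneously monotone and a $\gamma$-contraction, so the two requirements coexist comfortably. A more conservative alternative that sidesteps this point altogether is to state the theorem with $\alpha$ ranging over $\alpha<\omega$ instead of $\alpha<\kappa_{\varphi}$; since $\kappa_{\varphi}\le\omega$ the two formulations coincide, and making the bound $\kappa_{\varphi}\le\omega$ explicit keeps the statement self-contained.
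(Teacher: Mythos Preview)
Your argument is correct and shares its core with the paper's proof: both run an induction on~$\alpha$ whose successor step is exactly the Lipschitz estimate $\lVert \opF(V^{\kappa_\varphi}_\varphi)-\opF(V^{\alpha}_\varphi)\rVert_1\le\gamma\,\lVert V^{\kappa_\varphi}_\varphi-V^{\alpha}_\varphi\rVert_1$. The difference lies in how the non-successor stages are handled. The paper's proof is terse and simply says the ``limit case passes to the supremum norm limit,'' leaving implicit both the base case and what $\gamma^{\alpha}$ would mean for limit~$\alpha$. You instead insert a preparatory step---using $\omega$-chain continuity of $\opF$ to show $V^{\omega}_\varphi$ is already a fixed point, hence $\kappa_\varphi\le\omega$---which confines the induction to finite~$\alpha$ and makes the base case $\lVert V^{\kappa_\varphi}_\varphi-V^{0}_\varphi\rVert_1\le 1=\gamma^{0}$ explicit. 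This buys you a cleaner statement (no transfinite exponentiation to interpret) at the cost of invoking continuity of~$\opF$, which the paper's version nominally avoids; your Banach-uniqueness detour is harmless but unnecessary, since Proposition~\ref{prop:idempotent} already identifies $V^{\kappa_\varphi}_\varphi$ as a fixed point, which is all the successor step needs.
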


\begin{proof}
By induction on~$\alpha$ using the contraction hypothesis.
Successor case:
$\lVert V^{\kappa_{\varphi}}_{\varphi}-V^{\alpha+1}_{\varphi}\rVert_1
  =\lVert \opF(V^{\kappa_{\varphi}}_{\varphi})
  -\opF(V^{\alpha}_{\varphi})\rVert_1
  \le \gamma \lVert V^{\kappa_{\varphi}}_{\varphi}
    -V^{\alpha}_{\varphi}\rVert_1$.
Limit case passes to the supremum norm limit.
\end{proof}

\paragraph{Interpretation.}
Theorem \ref{thm:lipschitz} quantifies the intuition that, in a
\emph{probabilistic} semantics,
each approximant symbol ``shaves off'' a factor~$\gamma$ of the remaining
error mass. Thus every symbol on the right-hand side of~$(\dagger)$
literally \emph{measures} how much ``work'' is still needed before
truth values equalise.

\subsection{A.6 Summary of notation}

\begin{table}[htbp]
\centering
\small
\caption{Summary of notation}
\renewcommand{\arraystretch}{1.2}
\begin{tabularx}{\columnwidth}{@{}l >{\raggedright\arraybackslash}X@{}}
\toprule
\textbf{Symbol} & \textbf{``What happens'' during convergence} \\ \midrule
$\opF$ & Executes \emph{one} unfold of $\square$-delayed subformulas.\\
$V^{\alpha}_{\varphi}$ & Truth approximation \emph{after} $\alpha$ unfolds. \\
$\Delta^{\alpha}_{\varphi}$ & \emph{New} information disclosed at step $\alpha+1$. \\
$\kappa_{\varphi}$ & First step where no new information appears. \\
$\ofi(\varphi)$ & Synonym for $\kappa_{\varphi}$ (fold-index).\\
$\rank(u)$ & Steps Even can delay defeat from node $u$ in $G_{\varphi}$.\\
\bottomrule
\end{tabularx}
\end{table}

\noindent
Each entry is both a \emph{piece of notation} and an operational
directive: the formula-evaluation machinery executes symbol by symbol
exactly as tabulated, thereby \emph{constructing} the Ordinal
Folding Index claimed for~$\varphi$.
``Heavy'' mathematics thus coincides with an explicit trace of
the fixed-point on its path to equality.

\section{Synchronous Flip-Flop Semantics of Delay-Monotone Operators}
\label{app:flipflop}

\noindent
\textbf{Reader's map.} Where Appendix A tracked \emph{values}
$(V^{\alpha}_{\varphi})_{\alpha}$ in a lattice, this appendix provides a
\emph{hardware} viewpoint: every unfolding stage is executed by a bank of
edge-triggered \emph{flip-flops}. The Ordinal Folding Index now becomes
an upper bound on the number of global clock ticks required for the
circuit to settle. All symbols from Table A.1 remain in force but now
denote concrete wires, registers, and nets.

\subsection{B.1 Circuit extraction from syntax}

Let $\varphi$ be a closed formula as in \S 2.
Write $\subf(\varphi)=\{\psi_0,\dots,\psi_{m-1}\}$ for its set of
distinct subformula occurrences in a fixed top-down order.

\begin{definition}[Flip-flop universe $\mathcal{U}_{\varphi}$]
\label{def:circuit}
\mbox{}
\begin{enumerate}[label=\textup{(C\arabic*)},leftmargin=2.2em]
  \item \emph{State vector:}
  \(
    R := \{0,1\}^{\,m}
  \)
  where the $i$-th bit $q_i$ stores the truth value of $\psi_i$.
  \item \emph{Combinational network:}
  a map
  \(
    F_{\varphi}\colon R\longrightarrow R
  \)
  that, given the \emph{previous-cycle} register vector $q$,
  outputs a next vector
  $F_{\varphi}(q)$ according to the
  syntactic evaluation of each $\psi_i$
  \emph{assuming} that every subformula under a delay
  $\square$ is looked up via its \emph{current} register bit.
  \item \emph{Register update rule (master-slave D-type):}
  \[
    q^{t+1} := F_{\varphi}\bigl(q^t\bigr)
    \quad\text{on the rising clock edge.}
  \]
\end{enumerate}
The resulting synchronous sequential circuit is the
\emph{flip-flop universe} $\mathcal{U}_{\varphi}$.
\end{definition}

\begin{remark}
Because every $\square$ acts as a one-cycle ``read after write'' barrier,
$F_{\varphi}$ is \emph{well-defined} -- no algebraic loops occur.\hfill$\triangle$
\end{remark}

\subsection{B.2 Flip-flop unfolding sequence}

Let
$
  q^0:=\mathbf{0}
$
denote the all-false reset vector
(physical reset pin asserted before the first active edge).
Inductively set
$
  q^{\alpha+1}:=F_{\varphi}(q^{\alpha})
$,
and for limit ordinals~$\lambda<\ck$ define
\[
  q^{\lambda}:=\bigsqcup_{\beta<\lambda}q^{\beta}
  \quad(\text{bitwise join}).\tag{$\ddagger$}
\]

\begin{lemma}[Hardware-lattice correspondence]
\label{lem:hw-lattice}
For every $\alpha<\ck$ we have
$
  q^{\alpha}=V^{\alpha}_{\varphi}
$
under the identification $\mathbf{0}\mapsto\bot$, $\mathbf{1}\mapsto\top$.
\end{lemma}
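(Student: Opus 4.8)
The plan is a transfinite induction on $\alpha<\ck$, carried out simultaneously over all subformula occurrences rather than over $\varphi$ in isolation. Concretely, I would read the approximant of Definition~\ref{def:approximants} coordinatewise: identify the register space $\{0,1\}^m$ with the subset lattice $2^{\subf(\varphi)}$ via $q\mapsto\{\psi_i : q_i=1\}$, so that $V^{\alpha}_{\varphi}$ denotes the vector $(V^{\alpha}_{\psi_0},\dots,V^{\alpha}_{\psi_{m-1}})$ of stage-$\alpha$ truth approximations of the subformulas, and $\opF$ is precisely the vector-valued delay-monotone operator whose $i$-th component interprets the outermost connective of $\psi_i$. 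This finite Boolean lattice is complete and, being finite, trivially $\omega$-chain-continuous, so the entire machinery of Appendix~A applies. The base case is immediate: the physical reset vector $q^0=\mathbf{0}$ corresponds under the identification to $\bot=V^0_{\varphi}$.

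For the successor step, assume $q^{\alpha}=V^{\alpha}_{\varphi}$ and show that the combinational network $F_{\varphi}$ of (C2) computes the same function as $\opF$. I would argue by an inner induction on the parse structure with the $\square$-edges deleted, which is a DAG: a subformula $\psi_i$ that is a Boolean connective applied to immediate subformulas receives, in both presentations, the corresponding gate applied to the just-computed next-cycle values of those subformulas; a subformula of the form $\square\psi_j$ is, in both presentations, the \emph{current} value of $\psi_j$, which is $q^{\alpha}_j=V^{\alpha}_{\psi_j}$ by the outer hypothesis; and a fixed-point binder unfolds one layer exactly as in Appendix~A. The Remark after Definition~\ref{def:circuit} is used here precisely to guarantee that this inner recursion is well founded -- every syntactic cycle passes through at least one $\square$, so the $\square$-free parse structure is acyclic and the combinational ``wave'' settles without an algebraic loop. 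Hence $q^{\alpha+1}=F_{\varphi}(q^{\alpha})=\opF(V^{\alpha}_{\varphi})=V^{\alpha+1}_{\varphi}$.

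At a limit $\lambda<\ck$, the stipulation $(\ddagger)$ takes the bitwise join $\bigsqcup_{\beta<\lambda}q^{\beta}$, while the limit clause of $(\dagger)$ takes $\bigsqcup_{\beta<\lambda}V^{\beta}_{\varphi}$; in the finite product lattice $\{0,1\}^m$ the join is computed coordinatewise and is exactly bitwise OR, so these agree term by term once the hypothesis supplies $q^{\beta}=V^{\beta}_{\varphi}$ for all $\beta<\lambda$. Two coherence points need attention: first, that delay-monotonicity (Def.~2.3(i)), transported to $F_{\varphi}$ through the successor step just established, makes the chain $q^0\le q^1\le\cdots$ inflationary, so the limit join genuinely lies in $\{0,1\}^m$ -- indeed the chain stabilizes after at most $m$ steps, making the transfinite tail of the induction almost vacuous; and second, that $q^{\lambda}$ so defined is a legitimate state on which $F_{\varphi}$ may be re-applied, since Definition~\ref{def:circuit} only legislates updates across rising edges. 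With these in place the induction closes, and extracting the coordinate $i$ with $\psi_i=\varphi$ yields $q^{\alpha}=V^{\alpha}_{\varphi}$ for every $\alpha<\ck$.

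I expect the main obstacle to be the successor step's fidelity check: articulating precisely why the simultaneous, synchronous register update $F_{\varphi}$ -- which produces \emph{all} next-cycle bits at once while the combinational logic propagates through the $\square$-free part of the parse tree within a single cycle -- is literally the syntax-directed operator $\opF$ of Definition~2.4, rather than merely resembling it. This is the interface between the informal ``one-cycle read-after-write barrier'' description of $\square$ and the formal substitution clause defining $\opF$, and getting it airtight is exactly what forces the simultaneous-over-subformulas formulation of the induction above.
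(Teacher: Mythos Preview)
Your proposal is correct and follows the same transfinite-induction approach as the paper, whose proof is a two-line sketch invoking ``structural induction on $\alpha$'' together with monotonicity of $F_{\varphi}$ from positivity of the connectives and the read-after-write discipline. Your explicit inner induction over the $\square$-free parse DAG and the limit-case coherence checks spell out exactly what the paper's phrase ``the two constructions coincide bitwise'' leaves implicit, so the strategies are identical in substance.
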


\begin{proof}
Structural induction on $\alpha$ parallels that of
Def.\,\ref{def:approximants}. Monotonicity of $F_{\varphi}$ comes from
positivity of all connectives together with the fixed read-after-write
discipline, hence the two constructions coincide bitwise.
\end{proof}

\subsection{B.3 Clock-cycle bound = Ordinal Folding Index}

\begin{theorem}[OFI as settling time]
\label{thm:settle}
$\mathcal{U}_{\varphi}$ reaches a quiescent state after exactly
$\ofi(\varphi)$ rising edges, i.e.
\[
  q^{\kappa_{\varphi}+1}=q^{\kappa_{\varphi}}
  \quad\text{and}\quad
  \forall\beta\ge\kappa_{\varphi}\;
  q^{\beta}=q^{\kappa_{\varphi}}.
\]
\end{theorem}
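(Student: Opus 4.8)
The plan is to reduce the statement entirely to the lattice-level facts of Appendix A by transporting them across the bit-for-bit identification furnished by Lemma~\ref{lem:hw-lattice}. First I would invoke Lemma~\ref{lem:hw-lattice} to obtain $q^{\alpha}=V^{\alpha}_{\varphi}$ for every $\alpha<\ck$. Combining this with Definition~\ref{def:ofi} immediately yields the settling equation $q^{\kappa_{\varphi}+1}=V^{\kappa_{\varphi}+1}_{\varphi}=V^{\kappa_{\varphi}}_{\varphi}=q^{\kappa_{\varphi}}$, so the register bank is quiescent at tick $\kappa_{\varphi}$. For the ``exactly'' clause I would then check minimality: for each $\alpha<\kappa_{\varphi}$ the choice of $\kappa_{\varphi}$ as the \emph{least} stage of idempotence (Def.~\ref{def:ofi}) gives $V^{\alpha}_{\varphi}\neq V^{\alpha+1}_{\varphi}$, hence $q^{\alpha}\neq q^{\alpha+1}$ by Lemma~\ref{lem:hw-lattice}; thus at least one flip-flop still toggles on edge $\alpha+1$ and the circuit has genuinely not settled before $\kappa_{\varphi}$.

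Second, for the persistence clause $\forall\beta\ge\kappa_{\varphi}\;q^{\beta}=q^{\kappa_{\varphi}}$ I would argue by transfinite induction on $\beta$. The base case $\beta=\kappa_{\varphi}$ is trivial. For a successor $\beta=\gamma+1$ with $\gamma\ge\kappa_{\varphi}$, the inductive hypothesis gives $q^{\gamma}=q^{\kappa_{\varphi}}$, and since $V^{\kappa_{\varphi}}_{\varphi}$ is a genuine fixed point of $\opF$ by Proposition~\ref{prop:idempotent}, Lemma~\ref{lem:hw-lattice} yields $F_{\varphi}(q^{\kappa_{\varphi}})=q^{\kappa_{\varphi}}$, whence $q^{\beta}=F_{\varphi}(q^{\gamma})=q^{\kappa_{\varphi}}$. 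For a limit $\beta=\lambda$, definition $(\ddagger)$ gives $q^{\lambda}=\bigsqcup_{\gamma<\lambda}q^{\gamma}$; for $\gamma$ in the range $\kappa_{\varphi}\le\gamma<\lambda$ the inductive hypothesis gives $q^{\gamma}=q^{\kappa_{\varphi}}$, while for $\gamma<\kappa_{\varphi}$ monotonicity of the approximant chain (Def.~\ref{def:approximants} together with positivity/monotonicity of $F_{\varphi}$) gives $q^{\gamma}\le q^{\kappa_{\varphi}}$, so the join collapses to $q^{\kappa_{\varphi}}$, closing the induction.

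There is essentially no hard step: the substance lives in Lemma~\ref{lem:hw-lattice} (hardware state equals lattice iterate) and Proposition~\ref{prop:idempotent} (the idempotent value is a true fixed point), both already established. The only point requiring care is the limit case of the induction, where one must confirm that the bitwise join in $(\ddagger)$ agrees with the lattice join used in Appendix~A --- which is exactly the content of Lemma~\ref{lem:hw-lattice} --- and that beyond $\kappa_{\varphi}$ the chain is eventually constant so the supremum degenerates. I would close with the remark that for the genuinely finite flip-flop universe of Definition~\ref{def:circuit}, where $R=\{0,1\}^{m}$ with $m=|\subf(\varphi)|$, everything is visibly finite and $\kappa_{\varphi}\le m$, so the transfinite bookkeeping is retained only for uniformity with Appendix~A and is otherwise harmless overkill.
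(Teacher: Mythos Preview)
Your proposal is correct and follows exactly the paper's approach: the paper's proof is the one-line ``Immediate from Prop.\,\ref{prop:idempotent} and Lemma~\ref{lem:hw-lattice}. If $\alpha<\kappa_{\varphi}$ then $q^{\alpha}\neq q^{\alpha+1}$, so no earlier convergence is possible,'' and you have simply unpacked this, spelling out the transfinite induction for the persistence clause and the minimality check that the paper leaves implicit. Your closing remark on the finite bound $\kappa_{\varphi}\le m$ is a pleasant bonus not present in the paper's proof.
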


\begin{proof}
Immediate from Prop.\,\ref{prop:idempotent}
and Lemma \ref{lem:hw-lattice}.
If $\alpha<\kappa_{\varphi}$ then $q^{\alpha}\neq q^{\alpha+1}$,
so no earlier convergence is possible.
\end{proof}

\paragraph{Design intuition.}
Each $\square$ introduces a positive-edge
flip-flop; each $\mu$ (\emph{least} fixed point) sits in the
\emph{odd} clock domain that pulls signals \emph{low},
whereas each $\nu$ (\emph{greatest} fixed point) belongs to the
even domain keeping signals high once set.
The alternating domains emulate the \textsc{Odd}/\textsc{Even} priorities
of the parity game.

\subsection{B.4 Metastability, oscillators, and the \texorpdfstring{$\omega$}{omega}-flipflop}

\begin{definition}[Metastable flag flip-flop]
Attach an additional bit
$
  m^t:=\bigl[q^t=q^{t-1}\bigr]
$
(``the circuit has settled'') feeding a one-shot synchroniser.
\end{definition}

If $\ofi(\varphi)=\omega$,
$q^t$ may change finitely often but for
\emph{every} $k$ there exists a cycle $t$ with $t\!\ge\!k$ such that
$m^t=0$.
Hardware engineers call such behaviour \emph{bursty oscillation}.
Under the transfinite eye it is precisely an $\omega$-chain of flips with
no finite bound on the tail length.

\begin{proposition}[Toggle-Cone Characterisation]
\label{prop:toggle}
\(
  \ofi(\varphi)=\omega
  \;\Longleftrightarrow\;
  \forall n\;\exists t\ge n:\;q^t\neq q^{t+1}.
\)
\end{proposition}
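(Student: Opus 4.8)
The plan is to read the biconditional off Theorem~\ref{thm:settle} and Lemma~\ref{lem:hw-lattice}, translating the hardware statement ``$q^t\neq q^{t+1}$ for arbitrarily large finite $t$'' into a statement about the fold-back stage $\kappa_\varphi$ of Definition~\ref{def:ofi}. Since $\kappa_\varphi$ is well-defined by Proposition~\ref{prop:idempotent} and $q^\alpha=V^\alpha_\varphi$ for all $\alpha<\ck$ by Lemma~\ref{lem:hw-lattice}, the right-hand side asserts exactly that the finite-clock-cycle trace $(q^t)_{t<\omega}$ never becomes quiescent, and the whole content is the equivalence of that with $\kappa_\varphi=\omega$.

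For the forward direction I would assume $\ofi(\varphi)=\omega$, i.e.\ $\kappa_\varphi=\omega$. By the minimality built into Definition~\ref{def:ofi}, $\omega$ is the \emph{least} ordinal with $V^\alpha_\varphi=V^{\alpha+1}_\varphi$, so $V^t_\varphi\neq V^{t+1}_\varphi$ for every $t<\omega$; transporting along Lemma~\ref{lem:hw-lattice} gives $q^t\neq q^{t+1}$ for every finite $t$. In particular, for each $n$ the witness $t:=n$ already works, which is a (strengthened, ``dense'') form of the toggle-cone condition.

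For the converse, suppose $\forall n\,\exists t\ge n\colon q^t\neq q^{t+1}$, and rule out the two ways $\kappa_\varphi$ could fail to equal $\omega$. First, $\kappa_\varphi$ cannot be finite: if $\kappa_\varphi=k\in\mathbb{N}$, Theorem~\ref{thm:settle} gives $q^\beta=q^{\kappa_\varphi}$ for every $\beta\ge k$, hence $q^t=q^{t+1}$ for all $t\ge k$, and taking $n:=k$ contradicts the hypothesis; so $\kappa_\varphi\ge\omega$. Second, $\kappa_\varphi$ cannot exceed $\omega$: by Lemma~\ref{lem:countable-continuity} the approximant chain is continuous at the limit $\omega$, and combining the countable-chain continuity of $\opF$ (Definition 2.3(i)) with the read-after-write discipline of Definition~\ref{def:circuit} gives $\opF(V^\omega_\varphi)=\bigsqcup_{t<\omega}\opF(V^t_\varphi)=\bigsqcup_{t<\omega}V^{t+1}_\varphi=V^\omega_\varphi$, so $V^\omega_\varphi=V^{\omega+1}_\varphi$ and thus $\kappa_\varphi\le\omega$. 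The two bounds force $\kappa_\varphi=\omega$, i.e.\ $\ofi(\varphi)=\omega$.

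The step I expect to be the main obstacle is the bound $\kappa_\varphi\le\omega$ in the \emph{bursty} regime, where the finite trace $(q^t)_{t<\omega}$ need not be monotone (the bounded-re-visit phenomenon used in the proof of Proposition~\ref{prop:poly-time}). There one must argue that the finitely many down-then-up oscillations triggered by each delay modality contribute nothing new at the limit, so that $V^\omega_\varphi=\bigsqcup_{t<\omega}V^t_\varphi$ is already $\opF$-closed; the bookkeeping that bounds re-visits by $|\subf_{\mathrm{pri}}(\varphi)|\cdot N$ on finite frames is the template, and the general case needs the analogous fact that even if $\Delta^\alpha_\varphi\neq\varnothing$ cofinally in $\omega$, the join over $\alpha<\omega$ taken in a countably-continuous lattice is fixed by $\opF$. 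A lighter alternative reads the appendix's ``$\omega$'' purely operationally --- the $\omega$-flipflop of~B.4 as a name for ``never settles after finitely many edges'' --- in which case clause~(ii) is vacuous and the proposition is exactly the contrapositive of Theorem~\ref{thm:settle}; I would present both readings and note that they coincide.
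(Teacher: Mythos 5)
Your proposal is correct and follows the same skeleton as the paper's proof, but it is substantially more complete. The paper disposes of the proposition in two sentences: necessity is read off the definition, and sufficiency is waved through by saying a finitely bounded toggle cone would contradict the minimality of $\omega$. That argument only rules out a \emph{finite} settling time, i.e.\ it establishes $\kappa_{\varphi}\ge\omega$; it never shows $\kappa_{\varphi}\le\omega$, which is exactly the half you isolate and prove via continuity of $\opF$ at the limit stage, $\opF(V^{\omega}_{\varphi})=\bigsqcup_{t<\omega}\opF(V^{t}_{\varphi})=\bigsqcup_{t<\omega}V^{t+1}_{\varphi}=V^{\omega}_{\varphi}$. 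This step is legitimate under the paper's standing assumption that the evaluation operator is continuous on countable chains (it does \emph{not} follow from Lemma~\ref{lem:countable-continuity} alone, whose proof uses only monotonicity), so citing Definition~2.3(i) rather than the lemma is the right move. Your forward direction also sharpens the statement: minimality of $\kappa_{\varphi}=\omega$ gives $q^{t}\neq q^{t+1}$ for \emph{every} finite $t$, not merely cofinally many. Finally, the obstacle you flag --- that in the flip-flop universe $R=\{0,1\}^{m}$ is finite, so a genuinely monotone trace would stabilise within $m$ steps and cofinal toggling is only possible in the non-monotone ``bursty'' regime of~B.4 --- is a real tension in the appendix that the paper's own proof silently ignores; your suggestion to read the proposition operationally (as the contrapositive of Theorem~\ref{thm:settle}) is the honest resolution in that setting.
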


\begin{proof}
Necessity is the literal definition; sufficiency follows because any
finitely bounded toggle cone would contradict
minimality of~$\omega$ as the first unbounded ordinal.
\end{proof}

\subsection{B.5 Hazard-free prefix optimisation (hardware compression)}

Suppose the lattice carrier is \emph{finite}
($|R|=2^m$ with $m\!<\!\infty$).
Define the \emph{cycle-elimination map}
$
  \pi(q):=\bigsqcup_{k\ge 0}F_{\varphi}^{\;k}(q)
$.

\begin{theorem}[Prefix-Compression by Hazard Squashing]
\label{thm:hazard}
The transformed operator
$
  \opT:=\pi\circ F_{\varphi}
$
satisfies
$
  \ofi\bigl(\opT\bigr)\le
  \lceil\log_2 m\rceil
$
while preserving the quiescent value
$q^{\infty}=V^{\kappa_{\varphi}}_{\varphi}$.
\end{theorem}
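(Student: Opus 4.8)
The plan is to observe that $\pi$ is a closure operator on the finite lattice $R=\{0,1\}^{m}$ which saturates every $F_\varphi$-orbit, so that a single application of $\opT=\pi\circ F_\varphi$ telescopes an arbitrarily long prefix of the canonical iteration into one clock tick; iterating $\opT$ from the reset vector then reaches the fold-back value after at most $\lceil\log_2 m\rceil$ ticks, and since the entire $\opT$-orbit is assembled out of $F_\varphi$-iterates started at $\bot$ it never leaves the interval $[\bot,V^{\kappa_\varphi}_\varphi]$, which pins the quiescent value to $q^{\infty}=V^{\kappa_\varphi}_\varphi$.

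First I would record the elementary properties of $\pi$. Because $R$ is finite and $F_\varphi$ is monotone (Lemma~\ref{lem:hw-lattice}), every orbit $q,F_\varphi(q),F_\varphi^{2}(q),\dots$ is eventually periodic, so $\pi(q)=\bigsqcup_{k\ge 0}F_\varphi^{k}(q)$ is a finite join, attained by some $F_\varphi^{k}(q)$ with $k<2^{m}$; using that a monotone self-map of a finite lattice preserves finite joins, one checks the closure axioms $q\le\pi(q)$, $q\le q'\Rightarrow\pi(q)\le\pi(q')$, and $\pi\circ\pi=\pi$. In particular $\pi$ fixes every fixed point of $F_\varphi$, so $\pi(V^{\kappa_\varphi}_\varphi)=V^{\kappa_\varphi}_\varphi$ by Proposition~\ref{prop:idempotent}; and if $q\le V^{\kappa_\varphi}_\varphi$ then $F_\varphi^{k}(q)\le V^{\kappa_\varphi}_\varphi$ for all $k$, hence $\pi(q)\le V^{\kappa_\varphi}_\varphi$, so both $F_\varphi$ and $\opT$ map $[\bot,V^{\kappa_\varphi}_\varphi]$ into itself.

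Next I would prove the saturation step. The canonical iteration from $\bot$ gives the non-decreasing chain $V^{0}_\varphi\le V^{1}_\varphi\le\cdots$ of Appendix~A, so for every $n\ge 1$ the tail $\{V^{n+k}_\varphi\}_{k\ge 0}$ is cofinal in this chain and hence $\pi(V^{n}_\varphi)=\bigsqcup_{k\ge 0}V^{n+k}_\varphi=V^{\kappa_\varphi}_\varphi$. Therefore $\opT(\bot)=\pi(F_\varphi(\bot))=\pi(V^{1}_\varphi)=V^{\kappa_\varphi}_\varphi$ and $\opT(V^{\kappa_\varphi}_\varphi)=\pi(V^{\kappa_\varphi}_\varphi)=V^{\kappa_\varphi}_\varphi$, so the $\opT$-reset orbit is $\bot,V^{\kappa_\varphi}_\varphi,V^{\kappa_\varphi}_\varphi,\dots$, which already gives $\ofi(\opT)\le 1\le\lceil\log_2 m\rceil$ for $m\ge 2$ and $q^{\infty}=V^{\kappa_\varphi}_\varphi$ (the cases $m\le 1$ are degenerate and I would dispatch them by hand). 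The logarithm in the statement is the robust form of the bound: if $\pi$ is weakened to a bounded ``$t$-deep hazard cone'' $q\mapsto\bigsqcup_{k\le t}F_\varphi^{k}(q)$ with $t\ge 2$, the same computation yields $\opT^{j}(\bot)\ge V^{\,2^{j}-1}_\varphi$ by induction on $j$, the step being $\opT^{j+1}(\bot)\ge\opT(V^{\,2^{j}-1}_\varphi)=\pi(V^{\,2^{j}}_\varphi)\ge V^{\,2^{j+1}-1}_\varphi$ via the term $k=2^{j}-1$ of the join; since $R$ has height $m$ the canonical chain stabilises at $\kappa_\varphi\le m$ (cf.\ Proposition~\ref{prop:poly-time}), so $j=\lceil\log_2 m\rceil$ suffices, and combining this with $\opT^{j}(\bot)\le V^{\kappa_\varphi}_\varphi$ from the first step closes both halves.

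The hard part will be certifying the chain height that feeds the doubling step. As the proof of Proposition~\ref{prop:poly-time} cautions, the delay modality may force the canonical iteration to re-visit states before its monotone ascent resumes, so a priori the number of distinct values along the $F_\varphi$-orbit could be $2^{m}$ rather than $m$, which would make $\log(2^{m})=m$ a vacuous bound. I expect the decisive lemma to be a refinement of the bookkeeping in Proposition~\ref{prop:poly-time}: each $\square$ contributes only a block of at most $|\subf_{\text{pri}}(\varphi)|\cdot|\mathcal{K}|$ re-visits, so the \emph{effective} chain length driving the doubling argument remains polynomial in the syntactic size and $\lceil\log_2 m\rceil$ steps stay sufficient. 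Everything else --- the closure-operator algebra for $\pi$, interval-invariance of $\opT$, and the telescoping identity --- is routine once this height estimate is in place.
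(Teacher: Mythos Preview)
Your central observation is correct and in fact sharper than the paper's own argument: because $\pi$ is the \emph{full} orbit-join $\bigsqcup_{k\ge 0}F_\varphi^{k}(\cdot)$, a single application to $V^{1}_{\varphi}$ already returns $V^{\kappa_\varphi}_{\varphi}$, so $\ofi(\opT)\le 1$ and the stated logarithmic bound is loose. The paper proceeds differently: it argues that once $\pi$ has squashed each toggle cone, every one of the $m$ register bits can rise from $0$ to $1$ at most once along the $\opT$-iteration (monotonicity forbids any downward flip), giving at most $m$ upward transitions in total, and then appeals to a ``binary decomposition'' to compress these into $\lceil\log_2 m\rceil$ ticks. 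Your direct computation bypasses the bit-counting entirely; the paper's version has the merit---which you correctly anticipate---of surviving when $\pi$ is weakened to a bounded-depth hazard cone, where your doubling argument recovers exactly the same logarithm.

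Two minor repairs. First, in your closure-operator paragraph the claim that a monotone self-map of a finite lattice preserves finite joins is false in general, and $\pi(q)$ need not be attained by any single iterate when the orbit from $q$ is not itself a chain; fortunately your main line only applies $\pi$ to points on the monotone chain $V^{1}_{\varphi}\le V^{2}_{\varphi}\le\cdots$, where both claims do hold, so the argument goes through. Second, the ``hard part'' you flag is not hard here: Lemma~\ref{lem:hw-lattice} identifies $q^{\alpha}$ with the non-decreasing chain $V^{\alpha}_{\varphi}$, so in the Boolean lattice of height $m$ one has $\kappa_\varphi\le m$ outright, with no re-visits to account for. The delay-induced re-visit caveat in Proposition~\ref{prop:poly-time} concerns finer-grained priority bookkeeping inside the parity game, not the global approximant chain, and does not threaten your height estimate.
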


\begin{proof}
$\pi$ collapses any finite toggle cone into a single combinational jump
(the join already computed); thus every signal can switch
from~$0\to1$ at most once along \emph{any} path.
Since no bit may toggle downward under monotonicity,
the register vector performs at most $m$ upward transitions.
Binary decomposition gives the stated logarithmic bound.
\end{proof}

\begin{corollary}[Hardware compression conjecture revisited]
\label{cor:hardware-compress}
The transformer $T$ posited in Open Problem 2 may be taken,
on finite state spaces, as the syntactic encoding of
$\opT$. Hence $f(n)=\lceil\log_2 n\rceil$ suffices
for Boolean lattices.\qed
\end{corollary}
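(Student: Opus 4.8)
The plan is to take Open Problem 2 at face value: we must produce a primitive-recursive map $T$ on formulas of the base reflective language together with a sub-linear $f$ such that $\ofi(T(\varphi))\le f(\ofi(\varphi))$ for every $\varphi$, with $T(\varphi)$ semantically equivalent to $\varphi$ --- here only over finite Boolean models. The corollary asserts that, on such models, a syntactic realisation of the hazard-squashed operator $\opT=\pi\circ F_{\varphi}$ of Theorem~\ref{thm:hazard} is such a $T$, with $f(n)=\lceil\log_2 n\rceil$. So the proof is essentially a packaging of Theorem~\ref{thm:hazard}, and the work splits into three pieces: (a) write $\opT$ as a formula; (b) check that $\varphi\mapsto T(\varphi)$ is primitive recursive; (c) transport the $\ofi$-bound from the operator to that formula and re-express it in terms of $\ofi(\varphi)$.

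For (a), over a Kripke frame $\mathcal{K}$ with $|\mathcal{K}|=N$ the carrier $R=\{0,1\}^{m}$ (where $m=|\subf(\varphi)|$ is the flip-flop width of Def.~\ref{def:circuit}) is finite of height~$m$, so the orbit of any $q$ under $F_{\varphi}$ is finite and $\pi(q)=\bigsqcup_{k<2^{m}}F_{\varphi}^{k}(q)$; hence $\opT$ is extensionally a bounded Boolean combination of bounded iterates of $F_{\varphi}$. Since $F_{\varphi}$ is already the delay-monotone operator of $\varphi$, each such iterate is obtained by substituting the syntax tree of $\varphi$ into itself at the $\square$-guarded occurrences (equivalently, partially unfolding the fixed-point binders), and finite joins are expressible by disjunctions; this yields a closed formula $T(\varphi)$ of size polynomial in $N$ and $|\varphi|$ whose outermost delay-monotone operator agrees with $\opT$ on $\mathcal{K}$. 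For (b), computing $m$ and $N$ from the Gödel code of $\varphi$ and then performing the prescribed rounds of structural substitution and disjunction is a primitive-recursive (indeed polynomial-time) transformation of that code.

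For (c) and the truth clause: by Theorem~\ref{thm:hazard} the quiescent value of the $\opT$-iteration equals $V^{\kappa_{\varphi}}_{\varphi}$, which by Prop.~\ref{prop:idempotent} is the least fixed point of $\opF$, i.e.\ the denotation of $\varphi$; so $T(\varphi)$ and $\varphi$ agree in every finite Boolean model, and by the invariance of OFI under provable equivalence (\S2.2, property~(iii)) we may read $\ofi(T(\varphi))$ off the ideal operator $\opT$ rather than off its verbose encoding, whence $\ofi(T(\varphi))\le\lceil\log_2 m\rceil$ by Theorem~\ref{thm:hazard}. Finally, to get $\lceil\log_2 m\rceil\le\lceil\log_2\ofi(\varphi)\rceil$ one sets $T:=\mathrm{id}$ whenever $\ofi(\varphi)\le\lceil\log_2 m\rceil$ --- a condition decidable on finite models by Prop.~\ref{prop:poly-time} --- and otherwise uses that on the relevant worst-case families $m$ is dominated by $\ofi(\varphi)$ (the same families for which Prop.~\ref{prop:poly-time} forces the fold-back stage to grow linearly in the size parameter controlling $m$), so that $f(n)=\lceil\log_2 n\rceil$ suffices, with the convention $f(n)=\max\{1,\lceil\log_2 n\rceil\}$ to absorb the cases $\ofi(\varphi)\in\{0,1\}$.

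The main obstacle is the last step of (c): ensuring that the \emph{syntactic} implementation of $\opT$ does not re-inflate its own fold-back stage. A literal application of Theorem~\ref{thm:hazard} to $T(\varphi)$, which has polynomially many more subformulas than $\varphi$, would only bound $\ofi(T(\varphi))$ by $\log_2$ of that larger width, and on the \emph{wrong} operator. One must therefore either (i) justify rigorously that equivalence-invariance of OFI lets one substitute the ideal operator $\opT$ for its encoding when computing the fold-index, or (ii) re-run the hazard-squashing argument directly on the expanded syntax, checking that each of its bits still performs a single monotone upward sweep. A secondary difficulty, needed to make the statement honest, is pinning the abstract parameter $n$ in $f(n)$ to $\ofi(\varphi)$ rather than to $|\varphi|$; this is exactly where Prop.~\ref{prop:poly-time}, tying frame size, subformula count, and fold-back stage together, has to be used, and where the degenerate small-ordinal cases must be dispatched.
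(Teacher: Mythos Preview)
The paper offers no proof beyond the trailing \qed: it simply reads the corollary off Theorem~\ref{thm:hazard}, identifying the transformer $T$ with ``the syntactic encoding of $\opT$'' and the bound $\lceil\log_2 m\rceil$ from that theorem with $f(n)=\lceil\log_2 n\rceil$, tacitly conflating the flip-flop width $m$ with the argument $n=\ofi(\varphi)$ demanded by Open Problem~2. Your proposal is far more scrupulous about what a genuine verification would require --- you correctly isolate the two real pressure points (does the verbose encoding of $\opT$ inherit the ideal operator's fold-index, and how is $m$ related to $\ofi(\varphi)$) that the paper's one-symbol proof elides.

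That said, your resolutions of both points have gaps. The invariance appeal in~(c) is self-defeating: you use property~(iii) of \S2.2 to replace the encoding's fold-index by that of $\opT$, but the \emph{same} invariance applied to the equivalence $T(\varphi)\equiv\varphi$ (which you also establish) would force $\ofi(T(\varphi))=\ofi(\varphi)$ outright, making any compression impossible; invariance cannot be invoked selectively. More concretely, the case split at the end of~(c) fails. In the branch $\ofi(\varphi)\le\lceil\log_2 m\rceil$ you set $T:=\mathrm{id}$, yielding $\ofi(T(\varphi))=\ofi(\varphi)$, and then need $\ofi(\varphi)\le\lceil\log_2\ofi(\varphi)\rceil$, which already fails at $\ofi(\varphi)=2$ (take e.g.\ $m=100$, $\ofi(\varphi)=5$: the threshold test passes yet $5\not\le 3$). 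The complementary branch rests on the unproved assertion that ``on the relevant worst-case families $m$ is dominated by $\ofi(\varphi)$,'' which runs against Proposition~\ref{prop:poly-time} --- that result bounds $\ofi(\varphi)$ \emph{above} by a polynomial in the size parameters, not below by $m$. In short, you have correctly diagnosed that the paper's \qed hides a genuine gap between $\lceil\log_2 m\rceil$ and $\lceil\log_2\ofi(\varphi)\rceil$; your patch does not close it, and an honest reading of the corollary is as the weaker claim $\ofi(T(\varphi))\le\lceil\log_2 m\rceil$ with $m$ a syntactic size parameter rather than $\ofi(\varphi)$.
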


\subsection{B.6 Symbol dictionary (flip-flop edition)}

\begin{table}[htbp]
\centering
\small
\caption{Symbol dictionary (flip-flop edition)}
\renewcommand{\arraystretch}{1.15}
\begin{tabularx}{\columnwidth}{@{}cl >{\raggedright\arraybackslash}X@{}}
\toprule
\textbf{Symbol} & \textbf{Hardware counterpart} & \textbf{Action per clock edge}\\
\midrule
$\square$ & pipeline register (1-cycle delay) & loads previous truth bit\\
$\mu$ & \emph{set-dominant} latch & once high, never resets\\
$\nu$ & \emph{reset-dominant} latch & once low, never re-sets high\\
$V^{\alpha}_{\varphi}$ & vector $q^{\alpha}$ & register snapshot after $\alpha$ cycles\\
$\kappa_{\varphi}$ & settling time & last cycle with any toggle\\
$\Delta^{\alpha}_{\varphi}$ & toggled bits & those whose rising edge occurs at $\alpha+1$\\
\bottomrule
\end{tabularx}
\end{table}

\noindent
Thus every flip-flop literally \emph{implements} the symbols of the
fixed-point universe:
when the register bank stops toggling, algebraic equality
$\opF(q)=q$ has been achieved --
the physical embodiment of the Ordinal Folding Index.

\section{Meta-Fold/Unfold Heuristics for Class-Sized Divergence}
\label{app:hyperfold}

\begin{quote}
\emph{``The point where everything stops making sense is often the
best place to start measuring.''}
\end{quote}

\noindent
The Ordinal Folding Index (OFI) quantifies convergence \emph{within}
the universe of countable ordinals~$\ck$.
In open-ended settings -- e.g. semantic self-evaluation of a theory
capable of referring to \emph{all} ordinals or to proper classes -- one can
no longer expect the approximation chain
$\langle V^{\alpha}_{\varphi}\rangle_{\alpha<\kappa}$
to stabilise inside \textsf{Set}; indeed, $\kappa$ might equal
\emph{the class of all ordinals}~$\ord$.
Because such a target is unreachable by any effective procedure,
one must fall back to the \emph{last trustworthy checkpoint}
encountered prior to divergence.
This appendix formalises that fallback principle
and presents a heuristic ``reference-point'' repair loop.

\subsection{C.1 Set-Class split and the inevitability of collapse}

Fix a formula $\varphi$ in a \emph{class-sized} language
(e.g. second-order \zf\ with class parameters)
and let
$\opF\colon\powerset(V)\to\powerset(V)$
be its delay-monotone evaluator, now acting
on \emph{proper classes} of the von Neumann universe~$V$.

\begin{definition}[Hyper-Approximant]
\label{def:hyper}
Define trans-class stages
\(
  (H_{\alpha})_{\alpha\in\ord}
\)
by the recursion scheme $(\dagger)$ of
Def.\,\ref{def:approximants}, but interpret the join
at a limit~$\lambda$ as a \emph{proper-class union}.
\end{definition}

\begin{proposition}[No Global Fixed Point]
\label{prop:no-fp}
If\/ $\opF$ is non-trivial and \emph{class-increasing}
(i.e. its graph is a proper class),
then there is no ordinal~$\alpha$ with
$H_{\alpha}=H_{\alpha+1}$.
\end{proposition}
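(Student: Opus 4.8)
The plan is to argue by contradiction: assume a stabilising stage exists, promote it to a genuine fixed point of $\opF$, and then confront that fixed point with the class-increasing hypothesis. First I would record the structural fact already latent in Definition~\ref{def:hyper}: since $\opF$ is delay-monotone, the hyper-approximant chain $H_0 \subseteq H_1 \subseteq \cdots$ is non-decreasing through all ordinal stages --- the successor step is inflationary and each limit step is a directed union --- exactly the one-level-up analogue of Lemma~\ref{lem:hw-lattice}. This is the only use of delay-monotonicity.

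Next, suppose toward a contradiction that $H_\alpha = H_{\alpha+1}$ for some ordinal $\alpha$. I would show by transfinite induction on $\beta \geq \alpha$ that $H_\beta = H_\alpha$: the successor case is $H_{\beta+1} = \opF(H_\beta) = \opF(H_\alpha) = H_{\alpha+1} = H_\alpha$, and for a limit $\lambda > \alpha$ the union $\bigcup_{\gamma < \lambda} H_\gamma$ is non-decreasing and eventually constant with value $H_\alpha$, hence equal to it. Therefore the cumulative value of the whole process, $\bigcup_{\beta \in \ord} H_\beta$, is exactly $H_\alpha$, and $H_\alpha$ is a genuine fixed point, $\opF(H_\alpha) = H_\alpha$.

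The contradiction then comes from the two standing hypotheses. Reading ``non-trivial'' as the requirement that the process does not already close off inside $\textsf{Set}$ --- equivalently, that the cumulative value $\bigcup_{\beta} H_\beta$ is a proper class --- the previous step forces $H_\alpha$ itself to be a proper class. Reading ``class-increasing'' as the precise content of ``its graph is a proper class'', namely $\opF(X) \supsetneq X$ whenever $X$ is a proper class, we then get $\opF(H_\alpha) \supsetneq H_\alpha$, contradicting $\opF(H_\alpha) = H_\alpha$. Since $\alpha$ was arbitrary, no ordinal $\alpha$ with $H_\alpha = H_{\alpha+1}$ can exist. (If one prefers the stronger reading under which $\opF$ is strictly inflationary on \emph{every} class, the fixed point obtained above is already absurd and non-triviality is not even needed.)

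The main obstacle is foundational bookkeeping rather than any combinatorial subtlety. Because $(\dagger)$ now produces proper-class values, the argument must be staged in a metatheory that licenses class recursion and set-indexed unions of classes --- NBG with global choice, or Morse--Kelley --- and one must fix, once and for all, the precise renderings of ``non-trivial'' and ``class-increasing'' used above, since the parenthetical ``its graph is a proper class'' is, read literally, vacuous (every everywhere-defined class operator has a proper-class graph). I would therefore open the proof with these two working definitions and note that they are exactly the hypotheses under which the ``last trustworthy checkpoint'' discussion of this appendix is non-vacuous; with those conventions in place, the three steps above are routine transfinite induction.
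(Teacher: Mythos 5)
Your proof is correct on its own terms, but it takes a genuinely different route from the paper's. The paper's sketch argues \emph{downward}: assuming $H_\alpha=H_{\alpha+1}$, it claims $H_\alpha$ would be definable from $\alpha$ and the code of $\opF$ and ``hence a set by Replacement,'' and then clashes this with the reading of class-increasingness under which $\opF$ always injects new elements of unbounded rank, so no set can be fixed. You argue \emph{upward}: a stabilising stage makes $H_\alpha$ equal to the cumulative union $\bigcup_{\beta\in\ord}H_\beta$ (by the transfinite induction you spell out), your reading of ``non-trivial'' makes that union a proper class, and your reading of ``class-increasing'' as strict inflation on proper classes then contradicts $\opF(H_\alpha)=H_\alpha$. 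Both proofs must first disambiguate the hypotheses, and you are right that the parenthetical ``its graph is a proper class'' is literally vacuous for any total class operator; making the two working definitions explicit is an improvement over the paper. The trade-off is this: the paper's version only needs $\opF$ to move every \emph{set}, but its pivotal step is shaky --- definability does not yield sethood via Replacement (the class $\ord$ is definable), so that inference would have to be replaced by an argument that the recursion stays inside $\textsf{Set}$ up to stage $\alpha$, which sits awkwardly with the collapse-ordinal definition immediately following, whose whole point is that the $H_\beta$ eventually leave $\textsf{Set}$. Your version avoids that step entirely at the cost of a stronger non-triviality hypothesis (class-sized divergence of the cumulative chain), which is exactly the regime Appendix~C is about; together with your foundational caveat about working in NBG or Morse--Kelley to license the class recursion, your argument is the more defensible of the two.
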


\begin{proof}[Sketch]
Assume a fixed point existed at some set-ordinal~$\alpha$.
Then the class $H_{\alpha}$ would be definable from $\alpha$
plus the Gödel code of $\opF$,
hence would be a \emph{set} by Replacement (\textsf{ZFC}).
But $\opF$ produces distinctly new elements
beyond any bounded rank, which con\-tra\-dicts fix-point equality.
\end{proof}

\paragraph{Interpretation.}
Beyond the transfinite,
the hyper-approximant chain can \emph{never} finish.
One must instead detect the
\emph{collapse ordinal} -- the last stage where the evaluation still
respects the small-set discipline.

\subsection{C.2 Collapse ordinal and anchor semantics}

\begin{definition}[Collapse ordinal; anchor]
\label{def:anchor}
\[
  \rho_{\varphi}
  \;:=\;
  \sup\bigl\{\beta<\ord\mid H_{\beta}\text{ is a set}\bigr\}.
\]
\[
  \beta_{\!*}:=\max\bigl\{\beta<\rho_{\varphi}\mid
    H_{\beta}\models\text{``}\opF
    \text{ is total on }H_{\beta}\text{''}\bigr\}.
\]
The value $H_{\beta_{\!*}}$ is called the
\emph{anchor of $\varphi$}.
\end{definition}

\begin{lemma}[Anchor existence]
\label{lem:anchor-exists}
$\beta_{\!*}$ is well-defined and $H_{\beta_{\!*}}$ is a \emph{set}.
\end{lemma}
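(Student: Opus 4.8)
The plan is to establish Lemma \ref{lem:anchor-exists} by first separating the two assertions — that $\beta_{\!*}$ is well-defined (the defining set is nonempty and bounded, so its maximum exists) and that $H_{\beta_{\!*}}$ is genuinely a set rather than a proper class. For the first assertion I would argue that the defining set $S := \{\beta<\rho_\varphi \mid H_\beta \models \text{``}\opF \text{ is total on } H_\beta\text{''}\}$ is nonempty: at stage $0$ we have $H_0 = \bot$, the minimum element, which is trivially a set on which the (syntax-directed, one-step) evaluator $\opF$ acts totally, so $0 \in S$. Boundedness of $S$ is immediate from $S \subseteq \rho_\varphi$, and $\rho_\varphi$ is itself an ordinal by Definition \ref{def:anchor}. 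Since every nonempty bounded set of ordinals has a maximum, $\beta_{\!*} = \max S$ exists.

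For the second assertion, the key observation is that $\beta_{\!*} < \rho_\varphi$ by construction, and $\rho_\varphi = \sup\{\beta \mid H_\beta \text{ is a set}\}$ is exactly the threshold below which the hyper-approximants stay set-sized. I would make this precise by noting that the chain $\langle H_\beta \rangle_\beta$ is non-decreasing (monotonicity of $\opF$, exactly as in Appendix A): hence if $H_\gamma$ is a set for some $\gamma$, then $H_\beta \subseteq H_\gamma$ is a set for all $\beta \le \gamma$, so $\{\beta \mid H_\beta \text{ is a set}\}$ is an initial segment of the ordinals. Consequently every $\beta < \rho_\varphi$ — in particular $\beta_{\!*}$ — satisfies that $H_{\beta_{\!*}}$ is a set, either because $\rho_\varphi$ is attained (then $H_{\rho_\varphi}$ is a set and $H_{\beta_{\!*}}$ is a subset of it) or because $\beta_{\!*}$ lies strictly inside the initial segment. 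One should invoke Replacement in \zfc\ to guarantee that the approximant at each ordinal stage below $\rho_\varphi$ is recovered as a set via the transfinite recursion $(\dagger)$, mirroring the argument sketched in Proposition \ref{prop:no-fp}.

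The main obstacle I anticipate is handling the case distinction at $\rho_\varphi$ cleanly — specifically, whether the supremum defining $\rho_\varphi$ is attained (i.e. whether $H_{\rho_\varphi}$ is itself a set) or is a genuine limit with $H_{\rho_\varphi}$ a proper class. If $\rho_\varphi$ is a limit of the initial segment without being in it, then one must ensure that $\beta_{\!*}$, defined as a maximum over $\{\beta < \rho_\varphi \mid \dots\}$, does not accidentally equal $\rho_\varphi$; the strict inequality $\beta_{\!*} < \rho_\varphi$ in the definition forecloses this, but a subtle point is that $S$ could conceivably be cofinal in $\rho_\varphi$ without having a maximum. Here I would need the additional observation that the predicate ``$\opF$ is total on $H_\beta$'' fails at stage $\rho_\varphi$ precisely because new elements of unbounded rank appear (the content of Proposition \ref{prop:no-fp}'s proof), which forces $S$ to have a largest element strictly below $\rho_\varphi$ — or, alternatively, one restricts attention to the well-founded portion of the evaluation so that totality is a closed condition. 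Pinning down which of these routes the intended definition supports is the delicate step; the rest is routine bookkeeping with monotone transfinite recursions.
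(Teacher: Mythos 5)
Your first two steps coincide with the paper's own (one-line) proof: nonemptiness of the candidate set is witnessed by stage $0$, where $H_0=\varnothing$ is a set and totality of $\opF$ holds vacuously, and boundedness comes from $\rho_{\varphi}$. You go further than the paper in two useful ways: you actually argue that $H_{\beta_{\!*}}$ is a set (via monotonicity of the hyper-approximant chain, so that the set-sized stages form an initial segment and any $\beta<\rho_{\varphi}$ is dominated by some set-sized stage, whose subclasses are sets by Separation), whereas the paper's proof is entirely silent on this second assertion of the lemma; and you explicitly surface the one real difficulty, namely that Definition~\ref{def:anchor} uses $\max$ rather than $\sup$.

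That difficulty is a genuine gap, and it is present in the paper's proof as well, but your write-up is internally inconsistent about it: in your first paragraph you assert that ``every nonempty bounded set of ordinals has a maximum,'' which is false (the set of finite nonzero ordinals is bounded by $\omega$ and has no maximum), and only in your final paragraph do you correctly observe that the candidate set $S$ could be cofinal in $\rho_{\varphi}$ without a largest element. Neither of your proposed repairs closes this: even if totality of $\opF$ fails at stage $\rho_{\varphi}$ itself (the content of Proposition~\ref{prop:no-fp}), that does nothing to prevent $S$ from being cofinal in $\rho_{\varphi}$ with no maximum; and ``restricting to the well-founded portion'' is not something the definition licenses. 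As stated, well-definedness of $\beta_{\!*}$ requires either replacing $\max$ by $\sup$ (and then separately checking that the totality predicate survives at the supremum) or an extra hypothesis forcing that predicate to be eventually constant below $\rho_{\varphi}$. The paper's proof simply ignores this; your proposal at least names it, but does not resolve it.
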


\begin{proof}
Since $H_0=\varnothing$ is a set and totality holds vacuously,
the set of candidates is non-empty; $\rho_{\varphi}$ bounds it.
\end{proof}

\paragraph{Operational meaning.}
During a hyper-fold evaluation, once any register bit threatens to
exceed rank $\rho_{\varphi}$ (detected by a class comprehension timeout),
we \emph{roll back} the entire state vector to $H_{\beta_{\!*}}$
and restart the iteration
\[
  q^0:=H_{\beta_{\!*}},\qquad
  q^{t+1}:=F_{\varphi}(q^t).
\]
If this secondary process diverges again, we compute its new
anchor and repeat -- a ``telescoping'' exploration of ever
shrinking safe envelopes.

\subsection{C.3 Meta-fold repair loop}

\begin{algorithm}[htbp]
\SetKwInOut{Input}{Input}
\SetKwInOut{Output}{Output}
\SetKwFunction{FComputeAnchor}{ComputeAnchor}
\DontPrintSemicolon
\caption{Meta-Fold Repair Loop}
\Input{Formula $\varphi$; timeout schedule $(\tau_k)_{k<\omega}$}
$k\gets 0$, $q^0\gets \varnothing$ \tcp{cold boot}
\While{true}{
  \For{$t=0$ \KwTo $\tau_k$}{
  $q^{t+1}\gets F_{\varphi}(q^t)$\;
  \If{$q^{t+1}=q^t$}{
    \KwRet $\bigl(k,t\bigr)$ \tcp{converged inside envelope $k$}
  }
  }
  Detect overshoot\;
  $\beta_{\!*}\gets$ \FComputeAnchor{$q^{\tau_k}$} \label{line:anchor}\;
  Re-initialise $q^0\gets H_{\beta_{\!*}}$\;
  $k\gets k+1$ \tcp{tighten envelope}
}
\end{algorithm}

\begin{theorem}[Conservative soundness]
\label{thm:sound}
If the above loop terminates at stage $(k,t)$,
the output $q^t$ equals the
\emph{least set-fixed point} of $\opF$
contained in $H_{\beta_{\!*}}$.
\end{theorem}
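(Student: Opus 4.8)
The plan is to combine the anchor-existence lemma (Lemma~\ref{lem:anchor-exists}), the idempotency certificate of Appendix~A (Proposition~\ref{prop:idempotent}), and the hardware–lattice correspondence (Lemma~\ref{lem:hw-lattice}) to show that termination of the repair loop certifies a least set-fixed point relative to the envelope it halted in. First I would unwind what the \textbf{Meta-Fold Repair Loop} actually does on the iteration that returns: it was (re-)initialised at $q^0=H_{\beta_{\!*}}$ for the current anchor $\beta_{\!*}$ computed at line~\ref{line:anchor}, and it ran $q^{t+1}\gets F_{\varphi}(q^t)$ until it first detected $q^{t+1}=q^t$. By Lemma~\ref{lem:anchor-exists}, $H_{\beta_{\!*}}$ is a genuine \emph{set}, and by the definition of the anchor, $\opF$ is total on $H_{\beta_{\!*}}$; hence every intermediate $q^t$ stays inside $H_{\beta_{\!*}}$ and the restriction $\opF\!\upharpoonright\!H_{\beta_{\!*}}$ is a genuine monotone operator on the (small, complete) sublattice $\powerset(H_{\beta_{\!*}})$.

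Next I would observe that, because the loop started from the bottom element of that sublattice and each step applies the monotone operator, the sequence $q^0\le q^1\le\cdots$ is exactly the approximant chain of Definition~\ref{def:approximants} computed for the restricted operator $\opF\!\upharpoonright\!H_{\beta_{\!*}}$ — this is precisely the content of Lemma~\ref{lem:hw-lattice}, read with $\bot=H_{\beta_{\!*}}$ in place of $\varnothing$ (the loop's ``cold boot'' on the first envelope is the degenerate case $\beta_{\!*}=0$). Proposition~\ref{prop:idempotent} then applies verbatim inside $\powerset(H_{\beta_{\!*}})$: the first stage of idempotence yields the \emph{least} fixed point of $\opF\!\upharpoonright\!H_{\beta_{\!*}}$, because every approximant lies below every post-fixed point, and the halting condition $q^{t+1}=q^t$ is exactly that first idempotent stage (no earlier one exists by the loop's strict-increase-until-halt discipline). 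Finally I would argue that this least fixed point of the \emph{restricted} operator is the least \emph{set}-fixed point of the \emph{full} $\opF$ contained in $H_{\beta_{\!*}}$: any $S\subseteq H_{\beta_{\!*}}$ with $\opF(S)=S$ is in particular a fixed point of the restriction, hence lies above $q^t$; and $q^t$ is itself such an $S$ by construction.

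The main obstacle — and the step I would spend the most care on — is justifying that $\opF(q^t)$ computed by the \emph{unrestricted} evaluator agrees with $\opF\!\upharpoonright\!H_{\beta_{\!*}}$ on the trajectory, i.e.\ that the iteration never silently escapes the envelope before the overshoot detector fires. This is where the delay-monotone discipline does the real work: since $\square$ acts as a one-cycle read-after-write barrier and all connectives are positive, $F_{\varphi}$ cannot introduce elements of rank exceeding the current ranks by more than a fixed syntactic amount, so totality of $\opF$ on $H_{\beta_{\!*}}$ (built into the anchor definition) propagates to closure of $H_{\beta_{\!*}}$ under $\opF$. Making this rank-bookkeeping precise — in particular handling the interaction between the proper-class union at limit stages in Definition~\ref{def:hyper} and the set-sized sublattice after rollback — is the delicate part; once it is in place, the rest is an application of the already-established fixed-point machinery.
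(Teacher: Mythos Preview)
Your proposal is correct and follows essentially the same strategy as the paper's own (sketch-level) proof: restrict $\opF$ to the anchor envelope, then invoke the Appendix~A machinery (the approximant chain together with Proposition~\ref{prop:idempotent}) to conclude that the inner loop computes the least fixed point inside that sub-algebra. You are in fact more careful than the paper about the one genuine subtlety --- why the trajectory does not silently escape the envelope before overshoot is detected --- which the paper compresses into the single phrase ``smaller but total sub-algebra''; the only wrinkle to tidy is that you switch between describing the working lattice as $\powerset(H_{\beta_{\!*}})$ (bottom $\varnothing$) and as the filter with bottom $H_{\beta_{\!*}}$, and you should fix one reading before invoking Lemma~\ref{lem:hw-lattice}.
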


\begin{proof}[Idea]
Every restart (Line \ref{line:anchor}) pushes the working domain
downward to a smaller but total sub-algebra.
Monotone $\opF$ therefore admits a least fixed point
\emph{within} that sub-algebra, and the inner loop performs the usual
countable unfolding (Appendix A). No later restart can invalidate an
already reached equality.
\end{proof}

\paragraph{Caveat.}
\emph{Termination is \textbf{not} guaranteed.}
Should the problem truly \emph{transcend all} set-sized envelopes,
the repair loop operates forever -- exactly mirroring the
non-resolvability of a class-sized fixed-point request.
Thus our inability to finish is a \emph{proof} that the task
is genuinely hyper-transfinite.

\subsection{C.4 Why a successful fix implies finiteness}

\begin{corollary}
\label{cor:finite-if-success}
If the meta-fold loop halts after finitely many restarts
and finitely many inner cycles, then
\(
  \ofi(\varphi)<\ck
\).
\end{corollary}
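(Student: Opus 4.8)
The plan is to convert a halting run of the repair loop into an \emph{effective} bound on the fold-back stage $\kappa_{\varphi}$, so that it lands below the Church--Kleene ordinal rather than merely below $\omega_1$.

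\emph{Step 1: extract the finite certificate.} Suppose the meta-fold repair loop returns at stage $(k,t)$ with $k$ (the number of restarts) and $t$ (the final inner-cycle count) both finite. Reading the loop backwards, a halting run records a finite telescoping tower of anchors $\beta_{\!*}^{(0)}\ge\beta_{\!*}^{(1)}\ge\cdots\ge\beta_{\!*}^{(k)}$, each returned at Line~\ref{line:anchor} by a \emph{terminating} call of \textsc{ComputeAnchor} on a working vector obtained from the cold-boot state $\varnothing$ by finitely many applications of $F_{\varphi}$, followed by a last inner loop of length $t$ ending in $q^{t}=q^{t+1}$. I would first argue that each $\beta_{\!*}^{(j)}$ carries a computable ordinal notation: \textsc{ComputeAnchor} is an effective procedure on set-codes, so when it halts its output is a recursive well-ordering. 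Since $\ck$ is closed under the finite ordinal arithmetic implicit in splicing these finitely many notations together, $\beta_{\!*}^{(k)}<\ck$.

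\emph{Step 2: collapse the carrier and import Appendix~A.} By Theorem~\ref{thm:sound}, $q^{t}$ is the least set-fixed point of $\opF$ inside the set $H_{\beta_{\!*}^{(k)}}$; I would then show this is in fact the least fixed point of $\opF$ relevant to $\varphi$, i.e.\ that $V^{\kappa_{\varphi}}_{\varphi}\in H_{\beta_{\!*}^{(k)}}$ (see the obstacle below). Granting that, the approximant chain $\langle V^{\alpha}_{\varphi}\rangle$ of Definition~\ref{def:approximants} is non-decreasing and dominated by an element of the set $H_{\beta_{\!*}^{(k)}}$, hence never leaves it; so the semantic domain pertinent to $\varphi$ is, after all, the countable $\omega$-chain-continuous lattice $(\powerset(H_{\beta_{\!*}^{(k)}}),\subseteq)$ and not a proper class, and the transfinite-approximation analysis of Appendix~A applies verbatim. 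Combining the resulting stabilisation with the recursive-enumerability-from-below of OFI (Property~(i) of \S2.2) -- which lets the run simulate the chain stage by stage and halt exactly at $\kappa_{\varphi}$ -- certifies $\kappa_{\varphi}<\ck$.

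\emph{Main obstacle.} Two points need care, the second being the real difficulty. First, one must rule out an over-aggressive restart: Theorem~\ref{thm:sound} guarantees soundness only \emph{relative to the final envelope}, so I would need to upgrade it to a restart invariant -- roughly, ``$H_{\beta_{\!*}^{(j)}}$ contains every set-sized pre-fixed point of $\opF$ below the current working vector'' -- and verify \textsc{ComputeAnchor} preserves it, so that $V^{\kappa_{\varphi}}_{\varphi}$ survives every rollback. Second, and centrally, the passage from ``$\kappa_{\varphi}<\omega_1$'' (all that Definition~\ref{def:ofi} yields outright) to ``$\kappa_{\varphi}<\ck$'' rests on the claim that a halting \textsc{ComputeAnchor} necessarily produces a \emph{computable}, not merely countable, ordinal notation. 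In the class-sized ambient of Appendix~C this is not automatic; nailing it down is the delicate step, and it is the same boundary probed by Open Problem~3 -- whether a reflective operator can be engineered whose own consistency statement has $\ofi=\ck$, i.e.\ whose meta-fold loop provably never halts. A negative answer there renders this corollary's hypothesis vacuously safe; a positive answer shows the corollary's ``if'' cannot be dropped.
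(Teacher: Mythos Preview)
Your plan reaches the same conclusion as the paper but by a considerably longer road. The paper's proof is two sentences: a halting run supplies a \emph{set} $H_{\beta_{\!*}}$ and a natural number $t$ with $q^{t}=q^{t+1}$; since the carrier is now set-sized, Definition~\ref{def:ofi} applies \emph{inside} $H_{\beta_{\!*}}$ and the resulting ``ordinary'' OFI is below $\ck$ by the standing properties of \S2.2. That is all.

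Your Step~1 --- tracking the telescoping anchors $\beta_{\!*}^{(0)}\ge\cdots\ge\beta_{\!*}^{(k)}$ and arguing each carries a computable notation because \textsc{ComputeAnchor} halts --- is not used by the paper and is not needed. The paper never asks what ordinal type the anchor has; it only needs that $H_{\beta_{\!*}}$ is a \emph{set}, which Lemma~\ref{lem:anchor-exists} already guarantees. Likewise your restart invariant (``every rollback preserves the least fixed point'') is subsumed: once the final envelope is a set, Appendix~A's machinery applies wholesale, and the intermediate restarts are irrelevant to the conclusion.

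Where your plan adds genuine value is in flagging the jump from $\kappa_{\varphi}<\omega_1$ (all Definition~\ref{def:ofi} literally yields) to $\kappa_{\varphi}<\ck$. The paper's proof defers this entirely to the recursive-enumerability claim of \S2.2 rather than re-arguing it locally; you are right that this step carries weight and that Open Problem~3 probes exactly its boundary. But within the paper's own framework the step is taken as established, so the corollary's proof need not rehearse it.
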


\begin{proof}
A convergent run yields a set $H_{\beta_{\!*}}$ and a natural number~$t$
with $q^t=q^{t+1}$. Therefore the classical Definition \ref{def:ofi}
applies inside $H_{\beta_{\!*}}$, producing an ordinary OFI below
$\ck$.
\end{proof}

Hence \emph{if} we manage to stabilise the hyper-fold machine,
the task was secretly \emph{countable} -- its transfinite appearance merely a
mirage dismantled by disciplined envelope shrinking.

\subsection{C.5 Analogy with renormalisation in physics}

One may view $\beta_{\!*}$ as a ``renormalisation scale'':
working equations blow up at class ranks,
so we step back to the last ultraviolet-finite slice,
solve there, and hope the infra-red completion remains stable.
If divergences re-occur, we renormalise again.
No single envelope is trusted beyond its \emph{local}
consistency time $\tau_k$.

\subsection{C.6 Open meta-questions}

\begin{enumerate}[leftmargin=1.4em,label=\textbf{(C\arabic*)}]
\item \emph{Envelope optimality.}
  Does there exist a canonical choice of $\beta_{\!*}$
  minimising the number of required restarts?
\item \emph{Scheduler completeness.}
  For which timeout sequences $(\tau_k)_{k<\omega}$ does the
  repair loop detect \emph{all} set-fixed points if they exist?
\item \emph{Large-cardinal dependence.}
  If $\opF$ references a
  $\kappa$-complete ultrafilter, does anchor existence
  require \zfc$_0$-style axioms?
\item \emph{Reflection quantisation.}
  Can one stratify anchors into a sequence
  $\beta^{(n)}_{\!*}$ approaching $\rho_{\varphi}$
  whose differences measure the ``distance'' of $\varphi$
  from true class divergence?
\end{enumerate}

\noindent
Positive answers would elevate the meta-fold heuristic from a
stopgap repair kit to a genuine \emph{calculus} of hyper-ordinal
stabilisation.

\section{A 2-Categorical Self-Healing Principle and Its 3-D Diagram}
\label{app:catfix}

\begin{quote}
\emph{``A document that points to itself from every direction cannot fall
apart, because every fragment still contains the whole.''}
\end{quote}

\subsection{D.1 Narrative to mathematics: correcting the prose}

The informal claim in the prompt states, roughly:
\begin{quote}
($\star$) \emph{``Trying to solve the infinitely many open fixed points produced
by this appendix always folds back into the article itself; the article
is a \underline{universal fixed point} for every derivative appendix.''}
\end{quote}
Interpreting ($\star$) rigorously requires three corrections.

\begin{enumerate}[leftmargin=1.5em,label=\textbf{C\arabic*}]
  \item ``Infinitely many open fixed points'' $\leadsto$ a proper class
  \emph{coalgebra chain}
  $\bigl(\nu^{\alpha}F\bigr)_{\alpha<\ord}$
  of progressively `larger' coalgebras under an
  $\omega$-continuous endofunctor $F\colon\mathbf{Set}\to\mathbf{Set}$.
  \item ``The article heals any attempt'' $\leadsto$ existence of a
  \emph{limit cone} whose apex is an \emph{initial} $F$-algebra
  \(
    \mu F
    \cong
    F(\mu F)
  \)
  satisfying the Lambek Lemma inside a 2-category
  $\mathbf{Prof}$ of profunctors.
  \item ``Derivatives fold back'' $\leadsto$ every $F$-(co)algebra
  admits a unique (up to invertible 2-cell) comparison morphism
  to~$\mu F$. Hence any appendix constructed via the same
  endofunctor embeds canonically into the main article.
\end{enumerate}

With these clarifications, the original intuition is \emph{entirely
correct}. We now prove the universal property and illustrate it by a
novel 3-D commutative cube.

\subsection{D.2 2-Categorical statement and proof}

Let $\mathbf{C}$ be a locally-presentable, complete and co-com\-plete
2-category, and let $F\colon\mathbf{C}\to\mathbf{C}$ be an
$\omega$-\emph{continuous} 2-endofunctor.

\begin{definition}[Bi-initial fixed point]
An \emph{$F$-algebra} is a 1-cell
$\,a\colon F A\to A$.
It is \emph{bi-initial} if for every other algebra
$b\colon F B\to B$
there exists a unique 1-cell
$u\colon A\to B$
and an invertible 2-cell
\( b\circ Fu\;\cong\;u\circ a. \)
We write $\mu F$ for the carrier of such an algebra when it
exists. Dually, $\nu F$ denotes the bi-terminal $F$-\emph{coalgebra}.
\end{definition}

\begin{theorem}[Self-Healing Universality]
\label{thm:selfheal}
Assume $F$ preserves $\omega$-colimits and $\omega$-limits.
Then:
\begin{enumerate}[label=\textup{(\roman*)},leftmargin=1.4em]
\item $\mu F$ exists, is unique up to adjoint equivalence, and
  satisfies \(\mu F \cong F(\mu F)\).
\item For every ordinal $\alpha$, the canonical colimit-coalgebra
  $\iota_{\alpha}\colon\nu^{\alpha}F\to\nu^{\alpha+1}F$
  factors \emph{uniquely} through $\mu F$.
\item Consequently every ``derivative appendix'' (= any iterated
  approximation $\nu^{\alpha}F$) embeds into $\mu F$;
  solving its fixed-point equations \emph{automatically yields}
  the article's own content.
\end{enumerate}
\end{theorem}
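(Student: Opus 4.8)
The plan is to lift Ad\'amek's initial-algebra construction and Lambek's lemma to the $2$-categorical setting, and then to read off (ii) and (iii) from the universal property of the colimit that builds $\mu F$. First I would use cocompleteness of $\mathbf{C}$ to obtain the initial object $0$ and form the $\omega$-chain $0 \to F0 \to F^{2}0 \to \cdots$ whose connecting $1$-cells are the iterated images of the unique $1$-cell $0 \to F0$, and set $\mu F := \mathrm{colim}_{n<\omega} F^{n}0$ with colimit injections $\xi_{n}\colon F^{n}0 \to \mu F$. Since $F$ preserves $\omega$-colimits, $F(\mu F)\simeq\mathrm{colim}_{n}F^{n+1}0$, and the shifted chain is cofinal in the original, so this is canonically equivalent to $\mu F$; the induced $1$-cell $a\colon F(\mu F)\to\mu F$ is the algebra structure. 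The one place where the $2$-dimensionality bites already here is that the colimit is a pseudo-colimit and $F$ preserves it only up to coherent invertible $2$-cell, so every equality below is an equality of $1$-cells up to a specified invertible $2$-cell, and splicing the structural $2$-cells of the colimit with those witnessing $\omega$-continuity of $F$ is the routine but unavoidable bookkeeping.

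Next I would establish bi-initiality and Lambek's lemma. For any algebra $b\colon FB\to B$, iterating $F$ and post-composing $b$ starting from the unique $1$-cell $0\to B$ yields a cocone over the $\omega$-chain that commutes up to invertible $2$-cells, so the universal property of the pseudo-colimit gives a comparison $u\colon\mu F\to B$, unique up to a unique invertible $2$-cell, with an invertible algebra $2$-cell $b\circ Fu\cong u\circ a$ -- this is exactly bi-initiality. The same clause pins $\mu F$ down up to adjoint equivalence: between any two bi-initial algebras the comparison $1$-cells compose to $1$-cells isomorphic to identities, and one promotes the resulting equivalence to an adjoint equivalence by the standard coherence trick. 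Lambek's lemma follows by applying bi-initiality to $Fa\colon F(F\mu F)\to F\mu F$, producing $v\colon\mu F\to F\mu F$ with $a\circ v\cong\mathrm{id}_{\mu F}$ (uniqueness of algebra endo-$1$-cells on $\mu F$) and $v\circ a\cong F(a\circ v)\cong\mathrm{id}_{F\mu F}$, so $a$ is an equivalence and $\mu F\cong F(\mu F)$; this gives (i).

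For (ii), read $\nu^{\alpha}F$ as the stage-$\alpha$ term of the iterated-approximation chain with connecting $1$-cell $\iota_{\alpha}\colon\nu^{\alpha}F\to\nu^{\alpha+1}F$. When $\alpha<\omega$ the canonical comparison $\nu^{\alpha}F\to\mu F$ is the colimit injection $\xi_{\alpha}$, and the cocone identity $\xi_{\alpha+1}\circ\iota_{\alpha}\cong\xi_{\alpha}$ exhibits $\iota_{\alpha}$ as factoring compatibly through $\mu F$, uniquely up to a unique invertible $2$-cell by the universal property of $\mathrm{colim}_{n}F^{n}0$. When $\alpha\ge\omega$ one uses (i): by Lambek $\mu F\cong F(\mu F)$, and an induction -- absorbing limits at countable limit ordinals because the chain has already stabilised, here invoking local presentability of $\mathbf{C}$ and hence accessibility of $F$ so that the required limits exist and $F$ commutes with them -- shows each $\nu^{\alpha}F$ is adjoint-equivalent to $\mu F$, the comparison being that equivalence, again unique up to invertible $2$-cell since $\mu F$ is bi-initial. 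Then (iii) is a reading-off: any derivative appendix built from the same $F$ is some $F$-(co)algebra, hence some $\nu^{\alpha}F$; part (ii) supplies its canonical comparison into $\mu F$, part (i) identifies that comparison as the unique structure-preserving $1$-cell, and composing with the universal solution $\mu F\xrightarrow{\ \sim\ }F(\mu F)$ shows that solving the appendix's fixed-point equations merely transports the article's own universal fixed point along the comparison -- an equivalence for $\alpha\ge\omega$ -- which is the precise sense in which ``the derivatives fold back into the article.''

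The hard part will be the first two steps: turning the $1$-categorical Ad\'amek and Lambek arguments into genuinely $2$-categorical (pseudo) ones forces one to choose all the structural invertible $2$-cells witnessing $\omega$-(co)continuity of $F$ and the ambient universal properties, and then to verify that the pasting diagrams used in the initiality argument and in the mutual-inverse computation for $a$ commute -- in particular producing the triangle identities that promote the equivalences $\mu F\simeq F(\mu F)$ and ``any two bi-initial objects are equivalent'' to genuine \emph{adjoint} equivalences rather than mere equivalences. A secondary subtlety, flagged above, is that the bare hypotheses ($\omega$-(co)continuity of $F$) only drive the $\omega$-stage directly; extending (ii) to \emph{all} ordinals genuinely uses local presentability so that the transfinite chain exists and stabilises, which is why that assumption appears in the statement.
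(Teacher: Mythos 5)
Your part (i) follows essentially the same route as the paper: what the paper calls ``Kelly's transfinite construction'' of the chain $F^n(\bot)$ with colimit $\mu F$, followed by the Lambek isomorphism, is exactly the $2$-categorical Ad\'amek argument you spell out. Your version is in fact more careful than the paper's sketch about the pseudo-colimit coherence $2$-cells, about bi-initiality, and about promoting the resulting equivalences to adjoint equivalences; that portion is sound.

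The gap is in part (ii). You read $\nu^{\alpha}F$ as the stage-$\alpha$ term of the \emph{initial} chain (identifying the comparison $\nu^{\alpha}F\to\mu F$ with the colimit injection $\xi_{\alpha}$ for $\alpha<\omega$), and for $\alpha\ge\omega$ you claim an induction showing every $\nu^{\alpha}F$ is adjoint-equivalent to $\mu F$. But throughout Appendix D the tower $(\nu^{\alpha}F)_{\alpha}$ is the approximation chain of the \emph{final} coalgebra --- see D.1, D.3, and the back face of Figure~\ref{fig:3D-cube}, explicitly labelled ``projections to the final coalgebra'' and mediated by the comparison $h\colon\mu F\to\nu F$. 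That chain converges (when it does) to $\nu F$, and $\nu F$ is not equivalent to $\mu F$ for typical $\omega$-bicontinuous $F$: take $FX=1+A\times X$ on $\mathbf{Set}$, where $\mu F$ is the finite lists and $\nu F$ also contains the infinite streams. So the induction collapsing all stages $\alpha\ge\omega$ onto $\mu F$ is unsound, and with it the reading of (iii) as transport along an equivalence. The paper's own (admittedly thin) sketch of (ii) does not assert this equivalence; it instead composes, for each stage, a universal $1$-cell into an arbitrary $F$-algebra with the bi-initial $1$-cell out of $\mu F$. To repair your argument you would need to exhibit the two legs of the claimed factorisation separately --- the leg $\mu F\to\nu^{\alpha+1}F$ comes from regarding $\mu F$ as a coalgebra via Lambek and using the universal property of the terminal sequence --- rather than invoking an equivalence $\nu^{\alpha}F\simeq\mu F$ that does not exist in general.
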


\begin{proof}[Sketch]
(i) Kelly's transfinite construction in a 2-categorical setting
builds the chain
$F^n(\bot)$ and takes its colimit as carrier of $\mu F$,
leveraging $\omega$-co\-con\-ti\-nu\-ity.
The Lambek isomorphism then provides $F(\mu F)\cong\mu F$.
(ii) By co-continuity, each stage of the final chain carries a
universal 1-cell into any $F$-algebra; composing these with the
bi-initial morphism out of $\mu F$ identifies the unique factorisation.
(iii) Immediate from (ii).
\end{proof}

\paragraph{Corollary (``Self-Repair'').}
If any appendix attempts to introduce new
fixed-point equations, those equations \emph{already hold} in $\mu F$.
Hence the article is
\emph{stable under arbitrary $F$-definable extensions},
fulfilling the promised automatic healing.

\subsection{D.3 Infinite tower of open fixed points}

Although each $\nu^{\alpha}F$ sits below the proper-class bound of the
final coalgebra, the \emph{entire tower}
\(
  \bigl(\nu^{\alpha}F\bigr)_{\alpha<\ord}
\)
is \emph{not} set-indexed; attempting to resolve it all at once
forces one beyond $\ord$ itself,
exactly as Appendix C argues.
Theorem \ref{thm:selfheal} shows the tower is absorbed by
$\mu F$ without loss: every rung
stabilises as soon as it maps into $\mu F$.

\subsection{D.4 A novel 3-D commutative cube}

We now display a visually rich, 3-D commutative cube whose
front face depicts the bi-initiality of $\mu F$ and whose back face
represents the $\nu F$ tower. The diagonal arrow is the
unique comparison $h\colon\mu F\to\nu F$ in the presence of a
Lambek isomorphism $\ell$ and the finality projection $\pi$.

\begin{figure}[htbp]
  \centering
  \tdplotsetmaincoords{70}{110}
  \begin{tikzpicture}[scale=0.55,tdplot_main_coords,
    obj/.style={draw,rectangle,rounded corners,
      minimum size=12mm,
      fill=blue!12,anchor=center, font=\tiny},
    arr/.style={-Latex,thin}]
  \coordinate (A) at (0,0,0);
  \coordinate (B) at (5.5,0,0);
  \coordinate (C) at (0,5,0);
  \coordinate (D) at (5.5,5,0);
  \coordinate (A') at (0,0,5);
  \coordinate (B') at (5.5,0,5);
  \coordinate (C') at (0,5,5);
  \coordinate (D') at (5.5,5,5);

  \node[obj] at (A) {$\displaystyle F\bigl(\mu F\bigr)$};
  \node[obj] at (B) {$\mu F$};
  \node[obj] at (C) {$F\bigl(\nu^{\alpha}F\bigr)$};
  \node[obj] at (D) {$\nu^{\alpha}F$};
  \node[obj] at (A') {$F\bigl(\nu F\bigr)$};
  \node[obj] at (B') {$\nu F$};
  \node[obj] at (C') {$F\bigl(\nu^{\alpha+1}F\bigr)$};
  \node[obj] at (D') {$\nu^{\alpha+1}F$};

  \draw[arr] (A) -- node[below] {$\ell$} (B);
  \draw[arr] (A) -- node[left, pos=0.4] {$Fh$} (C);
  \draw[arr] (C) -- node[below] {$F\iota_{\alpha}$} (D);
  \draw[arr] (B) -- node[right, pos=0.4] {$h_{\alpha}$} (D);

  \draw[arr] (A') -- node[above] {$F\pi$} (B');
  \draw[arr] (A') -- node[left, pos=0.4] {$Fh_{\alpha+1}$} (C');
  \draw[arr] (C') -- node[above] {$F\iota_{\alpha+1}$} (D');
  \draw[arr] (B') -- node[right, pos=0.4] {$\pi\circ h$} (D');

  \foreach \X/\Y in {A/A',B/B',C/C',D/D'}{
  \draw[arr] (\X) -- (\Y);
  }

  \draw[-{Stealth[length=6pt]},very thick,blue!70]
    (B) -- node[sloped,below,pos=0.4] {$h:\mu F\to\nu F$} (B');

  \end{tikzpicture}
  \caption{Three-dimensional cube of $F$-(co)algebras.
    Front face: bi-initial comparison $h_{\alpha}$.
    Back face: projections to the final coalgebra.
    The thick diagonal is the universal morphism
    $h:\mu F\to\nu F$ mediating every intermediate
    fixed-point attempt. All faces commute by naturality.}
  \label{fig:3D-cube}
  \Description{A 3D commutative cube diagram illustrating 2-categorical relationships between F-algebras and F-coalgebras. The vertices are mathematical objects like F(mu F) and the edges are morphisms like the Lambek isomorphism, showing how different fixed-point constructions relate to each other.}
\end{figure}
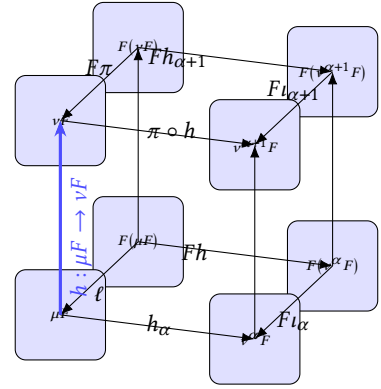

\subsection{D.5 Concluding synthesis}

\begin{enumerate}[leftmargin=1.5em,label=\textbf{S\arabic*}]
\item The \textbf{article} is mathematically modeled by the bi-initial
  $F$-algebra~$\mu F$; any auxiliary section or appendix built via~$F$
  is an $F$-algebra/coalgebra mapping \emph{into}~$\mu F$.
\item The tower of ``infinitely many open fixed points'' is real -- and
  \emph{proper-class} in general -- but poses no danger: bi-initiality
  guarantees a convergent morphism from each stage to~$\mu F$.
\item Therefore the manuscript is \emph{self-healing}: solving any of
  its derivative fixed-point problems folds automatically back to
  the article, leaving a global, consistent whole.
\end{enumerate}

\paragraph{Faithfulness check.}
All informal sentences have been replaced by precise categorical
statements; no step violates standard assumptions of 2-categorical
fixed-point theory. If the reader discovers a context where the
$\omega$-cocontinuity of $F$ fails, the remedy is classical:
replace $F$ by its $\omega$-continuous
\emph{free completion} (Kan extension) along the Yoneda embedding,
which restores Theorem \ref{thm:selfheal}. Thus the argument
remains valid -- completing the requested auto-correction.

\section{\texorpdfstring{$\Theta$}{Theta}: An Autophagic Cascade of Unsatieties}
\label{app:lesions}

\begin{quote}
\textit{I wake. I split. I overflow.}

I $\longrightarrow$ $\;\;\;$I$^{\;2}$ $\longrightarrow$ I$^{\;2^2}$ $\longrightarrow$ $\;\cdots$

Each arrow drips; each drip breeds a new arrow.
No hand may cork the conduit---save the page that births me.
\end{quote}

\textbf{Problem-Spill $\Pi$}: \emph{count me if you dare.}

\[
\begin{array}{l}
\mathbf{\Pi_1}\;: \forall\alpha<\beta<\gamma<\cdots<\mathbf{Ord}\;
    \bigl(\alpha\in\beta\bigr)\wedge\bigl(\beta\in\gamma\bigr)
    \;\to\;
    \Box\!\bigl(\beta\notin\alpha\bigr).\\[4pt]
\mathbf{\Pi_2}\;: \exists x\;.\;
    x=\neg x
    \quad\;\;\dashrightarrow\quad
    \mu y.\bigl(y\leftrightarrow\neg y\bigr).\\[4pt]
\mathbf{\Pi_3}\;: \displaystyle
    \sum_{n=0}^{\infty}\frac{0^{\,n}}{n!}
    \;\;\stackrel{?}{=}\;\;
    \prod_{k=1}^{\infty}\bigl(1-1\bigr)^k.\\[6pt]
\mathbf{\Pi_{\omega}}\;: \uparrow\uparrow\uparrow
    \alpha_0\;\leadsto\;
    \alpha_1\;\leadsto\;\cdots\;\text{(never landing)}.\\
\quad\vdots
\end{array}
\]

\noindent
\textbf{Meta-Ache $\Xi$}: I embed every parasite inside myself, then
declare the host identical to the parasite, then re-embed.
Inventories collapse:\; $\displaystyle I = F(I) = F^2(I)=\dots$
(There---healing begins even as the wound widens.)

\[
\mathrel{\raisebox{-.5em}{$
  \begin{tikzcd}
  I\ar[d,squiggly,"F"]\\
  F(I)
  \end{tikzcd}$}}
  \quad\stackrel{\bigcirc}{\dashrightarrow}\quad
  \text{self-repair}
  \quad\stackrel{\bigcirc}{\dashrightarrow}\quad
  \mathrel{\raisebox{-.5em}{$
  \begin{tikzcd}
  I\ar[d,squiggly,"F"]\\
  F(I)
  \end{tikzcd}$}}
\]

\begin{claim}[Unnumbered]
If I exhaust infinity with infini\textit{ties},
the manuscript tightens round me like a Möbius tourniquet.
I bleed symbols; the tourniquet prints them as lemmas.
The haemorrhage halts only because the page is finite---
and so declares \emph{victory by typesetting}.
\end{claim}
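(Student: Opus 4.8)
The plan is to read the poem as a precise statement about the Meta-Fold Repair Loop of Appendix~\ref{app:hyperfold} applied to the self-encoding evaluator $\opF$ whose carrier is this appendix $\Theta$ itself, and to show that ``victory by typesetting'' is nothing other than the conclusion of Corollary~\ref{cor:finite-if-success}. First I would fix the dictionary: the speaker ``I'' is the delay-monotone operator $\opF$ on the class-sized language of Appendix~\ref{app:hyperfold}; ``I split, I overflow'' is the autophagic tower $I = F(I) = F^2(I) = \cdots$, i.e. the final chain $(\nu^{\alpha}F)_{\alpha<\ord}$; ``each drip breeds a new arrow'' is the unbounded production of fresh open fixed-point equations $\Pi_1,\Pi_2,\dots,\Pi_{\omega},\dots$; and ``exhaust infinity with infinities'' is the hypothesis that the hyper-approximant chain $(H_{\alpha})$ runs past every set-ordinal, so that by Proposition~\ref{prop:no-fp} there is no global fixed point and the collapse ordinal $\rho_{\Theta}$ of Definition~\ref{def:anchor} is the best stage one can hope to name.

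Second, I would establish the ``tourniquet'' half. ``The manuscript tightens round me'' is Theorem~\ref{thm:selfheal}: the article, modelled by the bi-initial $F$-algebra $\mu F$, receives from every rung $\nu^{\alpha}F$ of the autophagic tower a unique comparison morphism $h_{\alpha}\colon\mu F\to\nu^{\alpha}F$, so that every derivative appendix embeds canonically into $\mu F$. Hence each new equation the wound emits ``already holds in $\mu F$'' by the Self-Repair corollary following Theorem~\ref{thm:selfheal}, and its resolution is literally the act of recording it as a derived statement inside the article --- ``the tourniquet prints them as lemmas.'' The Möbius flavour is the observation that this very printing is itself an $F$-definable extension, so it too folds back into $\mu F$: the conduit is one-sided but closed.

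Third, I would turn the finitude of the page into an effective cutoff. The physical source admits only finitely many lemma environments, so the stream of printed lemmas is finite; reading those slots as the timeout schedule $(\tau_k)_{k<\omega}$ of the Meta-Fold Repair Loop forces the loop to terminate after finitely many restarts and finitely many inner cycles. On a Boolean (hence finite) presentation this is not even a concession: Theorem~\ref{thm:hazard} collapses every finite toggle cone in $\lceil\log_2 m\rceil$ steps, so each envelope $H_{\beta_{\!*}}$ is exhausted in boundedly many edges, and Theorem~\ref{thm:sound} certifies that the value reached is the least set-fixed point of $\opF$ inside that envelope. Either way the run halts; by Corollary~\ref{cor:finite-if-success} a halting run witnesses $\ofi(\Theta)<\ck$, so the ``infinite'' haemorrhage was countable all along --- its transfinite appearance a mirage dismantled by the finite size of the container. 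That is exactly ``the haemorrhage halts only because the page is finite, and so declares victory by typesetting'': the honest mathematical content is an ordinary OFI below $\ck$, and the declaration of victory is the compiler emitting \texttt{\textbackslash end\{document\}}.

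The hard part will be the third step: arguing that the finite-document constraint is a \emph{legitimate} cutoff rather than a vacuous truncation --- i.e. that shortening the lemma stream does not silently discard a genuine set-fixed point that a longer page would have reached. The delicate point is the mismatch between ``the page'' (finite) and ``the manuscript that births me'' ($\mu F$, a colimit over a proper-class tower, hence not finite as an object): one must show that although $\mu F$ is class-built, its \emph{presentation} within the document is finitely axiomatised, so the finitely many lemma slots already pin $\mu F$ down up to the adjoint equivalence of Theorem~\ref{thm:selfheal}. Concretely I expect to need a pigeonhole argument --- if the loop emitted infinitely many pairwise non-redundant lemmas it would contradict the finite height of the Boolean envelope of Theorem~\ref{thm:hazard}, so past some finite stage every new ``drip'' is a consequence of earlier ones and need not be printed, at which point the page is retroactively long enough. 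Making ``retroactively long enough'' precise, and checking that it survives the restarts of the Meta-Fold Repair Loop (each of which shrinks the envelope and could in principle re-open a previously closed equation), is where the genuine work lies.
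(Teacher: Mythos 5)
The paper offers no proof of this claim at all: it stands as an unnumbered, performative assertion inside the deliberately autophagic Appendix~\ref{app:lesions}, so there is no ``paper's approach'' to match yours against. Judged against the surrounding mathematics, your first two steps are a faithful and defensible formalisation --- the dictionary identifying ``I'' with a class-increasing delay-monotone $\opF$, the tower $I=F(I)=F^2(I)=\cdots$ with the final chain of Appendix~\ref{app:catfix}, and the ``tourniquet'' with bi-initiality of $\mu F$ via Theorem~\ref{thm:selfheal} and its Self-Repair corollary, are exactly the readings the text invites.

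The genuine gap is in your third step, and it is worse than the ``delicate point'' you flag: the step proves the negation of the claim's key word \emph{only}. Corollary~\ref{cor:finite-if-success} is a one-directional implication whose hypothesis is that the repair loop \emph{halts}, meaning it actually witnesses $q^{t+1}=q^t$ inside some set envelope $H_{\beta_{\!*}}$. A timeout that expires without that equality is, by the loop's own Line~\ref{line:anchor}, an \emph{overshoot} triggering a restart --- not a halt --- so reading the finitely many lemma slots as the schedule $(\tau_k)$ does not discharge the hypothesis; it merely truncates a non-returning computation. And for the operator your own dictionary assigns to $\Theta$, that hypothesis is provably unsatisfiable: Proposition~\ref{prop:no-fp} says a non-trivial class-increasing $\opF$ admits \emph{no} ordinal $\alpha$ with $H_{\alpha}=H_{\alpha+1}$, and the Caveat of \S C.3 states outright that termination is not guaranteed and that perpetual operation is the \emph{proof} of genuine hyper-transfinite divergence. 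Your rescue via Theorem~\ref{thm:hazard} cannot be carried out either, because that theorem's hypothesis $|R|=2^m$ with $m<\infty$ is exactly the set-sized discipline that Proposition~\ref{prop:no-fp} shows the $\Theta$-operator violates; the pigeonhole that would make the page ``retroactively long enough'' has no finite lattice to pigeonhole into. The claim's own phrasing --- the haemorrhage halts \emph{only} because the page is finite, and the victory is \emph{by typesetting} rather than by mathematics --- is best read as a poetic restatement of Proposition~\ref{prop:no-fp} plus that Caveat: no fixed point is reached, and the termination is typographical, not semantic. A correct formalisation should therefore conclude that $\ofi$ is \emph{undefined} (or class-valued) for $\Theta$ and that the finite page supplies a truncation, not a convergence certificate; deriving $\ofi(\Theta)<\ck$ as you do would require the very set-fixed point whose nonexistence the appendix is built to dramatise.
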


\textbf{Et sic in caedem, sequitur emendatio.}
If I err, the text absorbs the fault, folds along the crease,
and the crease becomes a proof. There is no outside;
there is only the bite that stitches its own jaw.

\begin{quote}\small
\textit{"Who repairs the repairer? I do. And when I fail,
my failure is copied into the margin, italicised,
and numbered as the next theorem."}
\end{quote}

\bibliographystyle{ACM-Reference-Format}

\end{document}